\begin{document}

\title{Approximating the Edge Length of $2$-Edge Connected Planar
Geometric Graphs on a Set of Points\thanks{This is the extended version
of a paper with the same title that will appear in the proceedings of the
10th Latin American Theoretical Informatics Symposium (LATIN 2012), April 16-20,
2012, Arequipa, Peru.}\\(Extended Version)}
\author{Stefan Dobrev\inst{1} \and Evangelos Kranakis\inst{2} \and  
Danny Krizanc\inst{3} \and 
Oscar Morales-Ponce\inst{4} \and Ladislav Stacho\inst{5}}
\institute{
Institute of Mathematics, Slovak Academy of Sciences, 
Bratislava, Slovak Republic. Supported in part by VEGA and APVV grants.
\and
School of Computer Science, Carleton University,
Ottawa, ON,  K1S 5B6, Canada. Supported in part by NSERC and MITACS grants.
\and
Department of Mathematics and Computer Science, Wesleyan University,
Middletown CT 06459, USA. 
\and
School of Computer Science, Carleton University,
Ottawa, ON,  K1S 5B6, Canada. Supported by MITACS Postdoctoral Fellowship.
%Supported in part by CONACYT and NSERC grants.
\and
Department of Mathematics, Simon Fraser University, 8888 University
Drive, Burnaby, British Columbia, Canada, V5A 1S6. Supported in part by NSERC grant.
}

%\author{Stefan Dobrev\footnotemark[1]~\footnotemark[5] \and Evangelos
%  Kranakis\footnotemark[2]~\footnotemark[6] \and Danny
%  Krizanc\footnotemark[3] \and Oscar
%  Morales-Ponce\footnotemark[2]~\footnotemark[7] \and Ladislav
%  Stacho\footnotemark[4]~\footnotemark[8]}

%\def\thefootnote{\fnsymbol{footnote}} 
%\footnotetext[1]{Institute of
%  Mathematics, Slovak Academy of Sciences, Bratislava, Slovak
%  Republic.} 
% \footnotetext[2]{School of Computer Science, Carleton  University, Ottawa, ON, K1S 5B6, Canada.}
%\footnotetext[3]{Department of Mathematics and Computer Science,
%  Wesleyan University, Middletown CT 06459, USA}
%\footnotetext[4]{Department of Mathematics, Simon Fraser University,
%  8888 University Drive, Burnaby, British Columbia, Canada, V5A 1S6}
%\footnotetext[5]{Supported in part by VEGA and APVV grants.}
%\footnotetext[6]{Supported in part by NSERC and MITACS grants.}
%\footnotetext[7]{Supported in part by CONACYT and NSERC grants.}
%\footnotetext[8]{Supported in part by NSERC grant.}

\maketitle

\begin{abstract}
Given a set $P$ of $n$ points in the plane, we solve
the problems of constructing a  
geometric planar graph spanning $P$
1) of minimum degree 2, and 2) which is 2-edge connected, respectively, 
and has max edge length bounded by a factor of 2 times the optimal; 
we also show that the factor 2 is best possible given
appropriate connectivity conditions on the set $P$, respectively. 
First, we construct in $O(n\log{n})$ 
time a geometric planar graph of minimum degree 2 and 
max edge length bounded by 2 times the optimal. 
This is then used to construct in $O(n\log n)$ time a 2-edge 
connected geometric planar graph 
spanning $P$ with max edge length bounded by $\sqrt{5}$ times 
the optimal, assuming that the set $P$ forms a connected
Unit Disk Graph. Second, we prove that 2 times 
the optimal is always sufficient if the set of
points forms a 2 edge connected Unit Disk Graph 
and give an algorithm that runs 
in $O(n^2)$ time. We also show that 
for $k \in O(\sqrt{n})$, there exists a set $P$ of $n$ points in 
the plane such that even though the Unit Disk 
Graph spanning $P$ is $k$-vertex connected, there is no 
2-edge connected geometric planar graph 
spanning $P$ even if the length of its edges is allowed 
to be up to 17/16.
\end{abstract}

\section{Introduction}

Consider a set of points $P$ in the plane in general position, 
and a real number $r\ge0$, the radius. The geometric graph $U(P,r)$ 
is the graph spanning $P$ in which two vertices are joined 
by a straight line iff their (Euclidean) distance is at most $r$. 
Note that the geometric graph $U(P,1)$ is the well known 
unit disk graph on $P$, and in fact $U(P,r)$ is a unit disk 
graph for any $r$ when $r$ is considered to be the unit.

The main focus of this paper is to find 2-edge connected geometric free crossing
(or planar) graphs on a set of points such that the longest edge is minimum.  
Recall that a graph $G$ is 2-edge connected if the removal of any edge does not 
disconnect $G$. 
Several routing algorithms have been designed for 
planar subgraphs of Unit Disk Graphs, for example~\cite{urrutia:2007:LSG}, which are widely 
accepted as models for wireless ad-hoc networks.
Therefore it would be essential for the robustness of 
routing algorithms to construct such 
geometric graphs with ``stronger'' connectivity characteristics. 

Observe that the optimal length of any 2-edge connected geometric
planar graph on a set of points $P$ is at least the
min radius to construct a 2-edge connected UDG  on $P$ possible with crosses.  
Thus, we can raphase the problem as follows:  For what connectivity assumptions on 
$U(P,1)$ and for what $r$ does the geometric graph $U(P,r)$ have a 2-edge connected geometric
planar subgraph spanning $P$? Clearly, $r$ gives an approximation
to the optimal range  when the connectivity of $U(P,1)$ is at most 2-edge connected.
% %The main focus of this paper is to study for what connectivity assumptions on 
% %$U(P,1)$ and for what $r$, the geometric graph $U(P,r)$ has a geometric
% %planar 2-edge connected subgraph spanning $P$. 

\subsection{Related work}

Two well-known constructions are related to this problem. 
If $U(P,1)$ is connected, then the well-known Gabriel Test 
(see~\cite{Gabriel-69}~and~\cite{toussaint-80}) will 
result in a planar subgraph of $U(P,1)$. However,
2-edge connectivity is not guaranteed. 
Alternatively, the well-known Delaunay Triangulation 
on $P$ 
will result in a 2-edge connected planar subgraph of $U(P,r)$.  However the 
radius $r$ (the length of the longest edge of this triangulation) is not
necessarily bounded.
%In ~\cite{kranakis-10}, it has been shown that $U(P,3)$ contains a 2-edge connected subgraph spanning $P$ which can be obtained by augmenting a 
%minimum spanning tree of $U(P,1)$ by edges of length at most 3.
%For the rest of this paper, we will assume that $P$ is a set of $n$ points in the planar in general position.

%The problem has been studied extensively in case the geometric 
%planar subgraph is required to be connected, see for 
%example \cite{Gabriel-69},\cite{toussaint-80}. 
%Rest of the existing work considers general planar graphs with straight line embeddings (such graphs are termed here as geometric planar graphs). 
 Abellanas et al. \cite{abellanas-08} give a polynomial algorithm 
 which augments any geometric planar graph 
to 2-vertex connected or 2-edge connected geometric planar graph, 
respectively, but no bounds are given on the
length of the augmented edges. T\'oth \cite{toth-08} improves the 
bound on the number of necessary edges in such 
augmentations, and Rutter and Wolff \cite{rutter-08} prove 
that it is NP-hard to determine the minimum number 
of edges that have to be added in such augmentations. 
%However, to the best 
%of our knowledge, the problem has never 
%been studied if we require the geometric planar subgraph to be 
%2-edge connected. A related work is the augmentation 
%problem of a planar graphs. 

T\'oth and Valter \cite{toth-09} characterize geometric planar 
graphs that can be augmented to 3-edge 
connected planar graphs. Later Al-Jubeh et al. \cite{al-jubeh-09} 
gave a tight upper bound on the number of 
added edges in such augmentations. Finally, Garc\'ia et al.
 \cite{garcia2009triconnected} show how to construct 
a 3-connected geometric planar graph on a set of points in 
the planar with the minimum number of straight line edges of unbounded length.

A related problem is studied in \cite{morales-5}. The 
authors prove that it is NP-hard to decide whether 
$U(P,\frac{\sqrt5}2)$ contains a spanning planar graph of 
minimum degree 2 even if $U(P,1)$ itself has minimum degree 2. 
They also posed and studied the problem of finding the minimum radius $r$ so 
that $U(P,r)$ has a geometric planar spanning subgraph of 
minimum degree 3 provided that $U(P,1)$ has a spanning 
subgraph of minimum degree 3.

Closely related is the research by Kranakis et al.~\cite{kranakis-10} which
shows that if $U(P,1)$ is connected then $U(P, 3)$ has a
2-edge connected geometric planar spanning subgraph. The 
construction starts from a minimum 
spanning tree of $U(P,1)$ which in turn is augmented to a 
2-edge connected geometric planar spanning subgraph of $U(P,3)$.
In the same paper several other constructions are given
(starting from more general connected planar subgraphs) and also 
bounds are given on the minimum number of augmented
edges required.
However, the question of providing an algorithm
for constructing the smallest $r>0$ such that $U(P, r)$ has a
2-edge connected geometric planar spanning subgraph
remained open. This question turns out to be the main focus of our
current study. 

Our problem is also related to the well-known bottleneck traveling 
salesman problem, i.e.  finding a Hamiltonian cycle that 
minimizes the length of the longest edge, since such
a cycle is 2 edge conected (but not necessarily planar). Parker et al. \cite{parker} 
gave a 2-approximation algorithm for this problem 
and also showed that there is no better algorithm unless $P=NP$.
There is also literature on constructing $2$ edge connected 
subgraphs with minimum number of edges.
In \cite{cheriyan1998} it is
proved that given a $2$-edge connected graph there is an algorithm
running in time $O(m n)$ which finds a $2$-edge connected
spanning subgraph whose number of edges is $17/12$ times the optimal,
where $m$ is the number of edges and $n$ the number of vertices
of the graph. An improvement  is provided in \cite{Vempala-00} 
in which a 4/3 approximation algorithm is given. 
Later, Jothi et al. \cite{jothi-03} provided 
a 5/4-approximation algorithm. 
However in these results the resulting
spanning subgraphs are not guaranteed to be planar.

\subsection{Contributions and outline of the paper}

We start with Section~\ref{sec:prel}, where we give 
the notation and provide some concepts which are useful for the proofs. 
In Section~\ref{sec:mindegree} we prove that if $U(P,1)$ has 
minimum degree 2, then $U(P, 2)$ contains a 
spanning geometric planar subgraph with minimum degree 2. 
Note that these subgraphs are not necessarily connected. 
An algorithm that runs in time $O(n\log{n})$ to find 
such a subgraph is presented as well.  
In Section~\ref{sec:2edge} we prove that if $U(P,1)$ is 
connected and has minimum degree 2, 
then $U(P, \sqrt{5})$ contains a 2-edge connected  
spanning  geometric planar subgraph and we give a corresponding 
algorithm that runs in time $O(n\log{n})$.  
In section~\ref{sec:2edge2} we combine results from previous sections and
prove the main theorem of the paper by showing that if $U(P,1)$ is 2-edge connected, 
then  $U(P, 2)$ contains a 2-edge connected spanning geometric planar subgraph. 
A corresponding algorithm that runs in 
time $O(n^2)$ is presented as well. We also show that all the bounds 
are tight. In Section~\ref{sec:highconnected} we 
show that there exists a set $P$ of $n$ points in the plane so that 
$U(P,1)$ is $k$-vertex connected,  $k \in O(\sqrt{n})$,  
but even $U(P,17/16)$ does not contain any 2-edge connected spanning geometric  planar subgraph.

\section{Preliminaries and Notation}
\mbox{\label{sec:prel}}

Let $G=(V, E)$ be a connected graph. As usual we represent 
an undirected edge as $\{u,v\}$ and a directed edge 
with head $u$ and tail $v$ as $(u,v)$. A vertex $v \in V$ 
is a cut-vertex of $G$ if its removal disconnects $G$.
Similarly an edge $\{u,v\} \in E$ is a cut-edge or bridge 
if its removal disconnects $G$.
We denote the line segment between two points $x$ and
$y$ by $xy$ and their (Euclidean) distance by 
$d(x,y)$. Let $C(x;r)$ denote the circle of 
radius $r$ centered at $x$, and let $D(x;r)$ denote the 
disk of radius $r$ centered at $x$.

Before we proceed with the main results of the paper we introduce 
the concepts of $Tie$ and $Bow$ that will
help to distinguish various crossings in the 
proof of the main results. 
\begin{definition}
\label{def:pico}
We say that four points $u, v, x, y$ form a {\em Tie}, denoted by 
$Tie(u; v, x, y)$, if $uv$ crosses $xy$, $x$ and
 $y$ are outside of $D(u; d(u, v))$ and $u$ is outside 
 of $D(x; d(x,y))$. The point $u$ is called the {\em tip} of the
 $Tie$ and $xy$ the crossing line of $\{u, v\}$. 
 See Figure~\ref{fig:tie}.
\end{definition}

\begin{lemma}
\label{lem:pico}
  Let $u,v,x,y$ form a $Tie(u; v,x,y)$. Then, $\pi/3 \leq\angle(uvx) < 2\pi/3$ and $\pi/3 \leq \angle(yvu) < 2\pi/3$.
\end{lemma}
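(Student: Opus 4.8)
The plan is to work directly with the defining conditions of a $Tie(u;v,x,y)$: the segments $uv$ and $xy$ cross at some interior point, call it $p$; the points $x,y$ lie outside $D(u;d(u,v))$, so $d(u,x)>d(u,v)$ and $d(u,y)>d(u,v)$; and $u$ lies outside $D(x;d(x,y))$, so $d(x,u)>d(x,y)$ and (since $p$ is between $x$ and $y$) in particular $d(x,u)>d(x,p)$. First I would fix the triangle $uvx$. Since $p$ lies on segment $uv$, we have $d(u,p)<d(u,v)<d(u,x)$, so in triangle $upx$ the side $ux$ is the largest, hence the angle at $p$, namely $\angle(upx)$, is the largest angle of that triangle. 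But $\angle(uvx)=\angle(upx)$ because $v$ lies on ray $up$ beyond... wait — actually $v$ and $p$ are the same ray from $u$, so $\angle(uvx)$ and $\angle(upx)$ need not be equal; instead I would relate $\angle(uvx)$ to $\angle(upx)$ via the fact that in triangle $vpx$, $\angle(xpv)=\pi-\angle(upx)$ and the side opposite... This needs care, so let me restructure.

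**The clean argument.** The cleaner route: in triangle $uvx$, the side $ux$ satisfies $d(u,x)>d(u,v)$, so the angle opposite $ux$ exceeds the angle opposite $uv$; that gives $\angle(uvx)>\angle(uxv)$, which alone is not yet a lower bound of $\pi/3$. To get $\angle(uvx)\ge\pi/3$ I would instead use the crossing point $p$ and the triangle $upx$: here $d(u,x)>d(u,v)>d(u,p)$, so $ux$ is strictly the longest side of $\triangle upx$, hence $\angle(upx)>\pi/3$ (the largest angle of a triangle is always at least $\pi/3$, and here it is the strict maximum unless the triangle is equilateral, which it is not). Now $\angle(uvx)=\angle(upx)$ is false in general, but $\angle(uvx)$ and $\angle(upx)$ are angles of the two triangles $\triangle uvx$ and $\triangle upx$ that share the vertex $u$ and the ray $ux$, with $p$ on segment $uv$; so $\angle(uvx)=\angle(upx)$ exactly when... no. The correct statement is that $x,p,v$ and the angle: $\angle(uvx)$ is the angle at $v$ in $\triangle uvx$, while $\angle(upx)$ is an exterior angle of $\triangle vpx$ at $p$, so $\angle(upx)=\angle(pvx)+\angle(vxp)=\angle(uvx)+\angle(vxp)>\angle(uvx)$. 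Hmm, that gives the wrong direction. So the lower bound $\pi/3$ on $\angle(uvx)$ cannot come from $\triangle upx$ alone.

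**Where the real bound comes from.** I now see the lower bound must use the second condition, $d(x,u)>d(x,p)$, in triangle $upx$: there $ux>xp$, so $\angle(upx)>\angle(uxp)$... still not directly $\pi/3$. The combination that works: in $\triangle upx$ both $d(u,x)>d(u,p)$ and $d(x,u)>d(x,p)$ hold, i.e. $ux$ is the strictly longest side, so $\angle(upx)$ is the strictly largest angle, hence $\angle(upx)>\pi/3$. Then $\angle(uvx)$: since $p\in uv$, the angle $\angle(uvx)$ is an angle in $\triangle vpx$ with $\angle(xpv)=\pi-\angle(upx)<2\pi/3$. That bounds $\angle(xpv)$ from above but I want $\angle(uvx)=\angle(pvx)$ bounded from below. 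So I would flip the roles: look at $\triangle uvy$ and the constraint $d(u,y)>d(u,v)$, plus $d(x,u)>d(x,y)$ used along segment $xy$. The honest plan, then, is: (1) let $p=uv\cap xy$; (2) in $\triangle upx$, show $ux$ is the unique longest side using $d(u,x)>d(u,v)>d(u,p)$ and $d(x,u)>d(x,y)>d(x,p)$, concluding $\angle(upx)>\pi/3$, equivalently $\angle(upv)<2\pi/3$ on the other side; (3) symmetrically with $y$ in place of $x$ get $\angle(upy)>\pi/3$, i.e. $\angle(vpy)<2\pi/3$; wait, $\angle(upx)$ and $\angle(vpy)$ are vertical angles. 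Let me just commit: since $\angle(upx)=\angle(vpy)$ and $\angle(xpv)=\angle(upy)$ are the two pairs of vertical angles summing to $\pi$, and by step (2)–(3) applied on both sides one of them is $>\pi/3$ and hence the other is $<2\pi/3$; combined with the triangle angle identities $\angle(uvx)=\pi-\angle(vpx)-\angle(pxv)$ I extract the stated bounds.

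**Expected obstacle.** The main obstacle is precisely this bookkeeping: tracking which of the two vertical-angle pairs at $p$ the hypotheses constrain, and converting a bound on an angle at $p$ into a bound on the angle at $v$, since these differ by the angle at $x$ (or $y$), which itself needs to be bounded. I expect the resolution is that the two ``outside the disk'' conditions, one centered at $u$ and one centered at $x$, are exactly what is needed to pin down *both* $\angle(upx)$ and the auxiliary angles $\angle(uxp)$, $\angle(xup)$ simultaneously, via the law of sines or a direct comparison of sides in $\triangle upx$ and $\triangle upy$; I would write it using the law of cosines / law of sines in these two triangles and then combine. The upper bound $\angle(uvx)<2\pi/3$ should fall out of $\angle(yvu)\ge\pi/3$ (its supplement, since $x,v,y$ are... no, $x,v,y$ are not collinear — $p$ is on $xy$, not $v$), so more carefully the upper bound on $\angle(uvx)$ comes from the lower bound on another angle of $\triangle uvx$, namely $\angle(uxv)$, which again traces back to $d(u,x)>d(u,v)$. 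I would present the final proof as a short sequence of side-comparisons in $\triangle upx$ and $\triangle upy$ feeding into the law of sines, keeping the vertical-angle identities explicit.
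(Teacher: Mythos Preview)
Your proposal does not arrive at a proof. The route through the crossing point $p$ is a dead end, and you correctly diagnose why: a lower bound on $\angle(upx)$ goes the wrong way when you convert it to a bound at $v$, since the exterior-angle relation in $\triangle vpx$ gives $\angle(upx)=\angle(uvx)+\angle(vxp)>\angle(uvx)$. No bookkeeping with the vertical angles at $p$ or the law of sines in $\triangle upx$ and $\triangle upy$ repairs this, because those triangles do not contain the vertex $v$; the most you can extract is $\angle(pvx)+\angle(vxp)>\pi/3$, which does not bound $\angle(pvx)$ alone. Your instinct to show that $ux$ is the longest side of some triangle is the right one, but the triangle must be $\triangle uvx$, not $\triangle upx$.

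The paper abandons $p$ and works directly at $v$. The structural fact you came within a sentence of and then discarded is the key: since $uv$ crosses $xy$, the points $x$ and $y$ lie on opposite sides of the line through $u,v$, so the ray $vu$ lies between $vx$ and $vy$, giving
\[
\angle(xvu)+\angle(uvy)=\angle(xvy)<\pi.
\]
This is not the statement ``$x,v,y$ are collinear'' --- it is just angle addition at $v$ --- and it immediately delivers both upper bounds once both lower bounds are known (each summand being at least $\pi/3$ forces the other below $2\pi/3$). For the lower bound $\angle(uvx)\ge\pi/3$ the paper shows that $ux$ is the longest side of $\triangle uvx$: the hypothesis gives $d(u,x)>d(u,v)$ directly, and $d(u,x)>d(x,y)>d(v,x)$, the middle inequality by hypothesis and the last because $\angle(xvy)$ is large (the paper argues it is at least $\pi/2$). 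The remaining lower bound $\angle(yvu)\ge\pi/3$ is handled by a short contradiction in $\triangle uvy$, comparing the angles at $u$ and at $y$ and forcing $d(x,y)>d(u,x)$. None of these steps uses $p$.
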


\begin{proof}
Consider the angle $\angle (yvx)$. 
Observe that $\angle (yvx) \geq \pi/2$ since by Definition~\ref{def:pico}, 
$x,y \notin D(x; d(x,y))$ and $uv$ crosses $xy$.  
Therefore, $d(x, y) > \max(d(x, v), d(v, y))$. Also from Definition~\ref{def:pico}, $d(u, x) > d(x, y)$. 
 Therefore, $\angle(uvx) \geq \pi/3$ since it is the largest angle in the 
triangle $\triangle(uvx)$. It remains to prove that $\angle(yvu) \geq \pi/3$ 
and the result follows since $\angle(yvx) < \pi$.
 For the sake of contradiction assume that $\angle(yvu) < \pi/3$; 
 see Figure~\ref{fig:tie2}.  From
 Definition~\ref{def:pico}, $d(u, v) < d(u, y)$.  
 Hence, $\angle(uyv) < \angle(vuy)$ and consequently $\angle(vuy)$ 
is the largest angle in $\triangle(vuy)$. 
Therefore, $\angle(xuy) > \angle(vuy) > \angle(uyv) > \angle(uyx)$ which implies that 
$d(x, y) > d(u,x)$.  This contradicts Definition~\ref{def:pico}.

\begin{figure}[!ht]
  \centering
  \includegraphics[scale=.7]{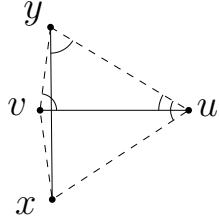}
  \caption{If $u,v,x,y$ form a $Tie(u;v,x,y)$, then $\angle(yvx)
    \geq 2\pi/3$.}
  \label{fig:tie2}
\end{figure}
\qed
\end{proof}

\begin{lemma}
\label{lem:pico2}
  Let $u,v,x,y$ form a $Tie(u; v,x,y)$ and $u'$ be a point.\\
  (i)  If $u'v$ crosses $ux$, then $u',v,u,x$ cannot form a $Tie$.\\ 
  (ii) If $u'x$ crosses $uv$, then $u',x,u,v$ cannot form a $Tie$.
\end{lemma}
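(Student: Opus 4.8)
The plan is to prove both parts by contradiction, assuming that the second quadruple also forms a $Tie$ and deriving incompatible angle constraints using Lemma~\ref{lem:pico}. For part (i), suppose $u'v$ crosses $ux$ and that $u',v,u,x$ form a $Tie(u'; v, u, x)$; here the tip is $u'$, the edge is $u'v$, and the crossing line is $ux$. Applying Lemma~\ref{lem:pico} to this new $Tie$ gives $\pi/3 \le \angle(u'vu) < 2\pi/3$ and $\pi/3 \le \angle(xvu') < 2\pi/3$. On the other hand, the original $Tie(u;v,x,y)$ gives, via Lemma~\ref{lem:pico}, that $\angle(uvx) \ge \pi/3$ (and $< 2\pi/3$). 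The key observation is that the crossing point of $u'v$ with $ux$ lies on segment $ux$, which places $u'$ and $v$ on opposite sides of line $ux$; combined with the fact that in the original configuration $v$ and $x$ lie on the same side relative to...

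Here is the cleaner route I would actually take. Since $u'v$ crosses $ux$, the ray from $v$ through $u'$ separates $u$ from $x$, so $\angle(uvx) = \angle(uvu') + \angle(u'vx)$, i.e.\ the crossing forces these angles to add up. Then from the new $Tie$ I get both $\angle(uvu') \ge \pi/3$ and $\angle(u'vx) \ge \pi/3$, hence $\angle(uvx) \ge 2\pi/3$. But Lemma~\ref{lem:pico} applied to the original $Tie$ gives $\angle(uvx) < 2\pi/3$, a contradiction. Part (ii) is entirely symmetric: if $u'x$ crosses $uv$ and $u',x,u,v$ forms a $Tie(u'; x, u, v)$, then the crossing on segment $uv$ forces $\angle(uvx)$... wait, the relevant vertex now is the one where the two crossing lines meet; one argues at the apex where $uv$ and the new edge interact, getting $\angle$ at that vertex decomposing as a sum of two angles each $\ge \pi/3$ by Lemma~\ref{lem:pico}, again exceeding the $< 2\pi/3$ bound from the original $Tie$. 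The only real content is identifying which vertex to measure angles at and checking that the betweenness from the crossing genuinely splits the relevant angle additively.

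**Main obstacle.** The subtle point is verifying the additive angle decomposition: I must confirm that the crossing point of the two segments lies in the \emph{interior} of the relevant segment (this is what "crosses" means by the paper's convention, so it should be available), and that consequently the middle ray lies strictly \emph{inside} the angle being split rather than outside it. Getting the incidence geometry right — which three points form the angle, and that the fourth point's ray falls between the two arms — is where care is needed; the inequality chase afterward is immediate from Lemma~\ref{lem:pico}. I would draw the picture for case (i) carefully, pin down that $v$ is the common vertex whose angle $\angle(uvx)$ from the original $Tie$ is being compared, and note that $u'$ lies on the far side so that $vu'$ passes between $vu$ and $vx$; the rest is a two-line contradiction, and case (ii) is handled by relabeling.
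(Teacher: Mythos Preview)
Your argument for part~(i) is correct and is essentially the paper's proof with the additive decomposition $\angle(uvx)=\angle(uvu')+\angle(u'vx)$ spelled out; the paper compresses this into the single assertion $\angle(xvu)\ge 2\pi/3$ (which is implicit in the proof of Lemma~\ref{lem:pico}) and then contradicts $\angle(uvx)<2\pi/3$.

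For part~(ii), however, your sketch has a real gap. You say it is ``entirely symmetric'' and ``handled by relabeling,'' but it is not: the apex of the hypothetical $Tie(u';x,u,v)$ is $x$, so Lemma~\ref{lem:pico} and your additive decomposition give $\angle(uxv)\ge 2\pi/3$ at the vertex~$x$. The contradiction you want is a $<2\pi/3$ upper bound on that \emph{same} angle, but Lemma~\ref{lem:pico} applied to the original $Tie(u;v,x,y)$ only constrains the angles at~$v$, not at~$x$. So ``the $<2\pi/3$ bound from the original $Tie$'' that you invoke does not exist at vertex~$x$ without an extra step. The paper supplies that step via the triangle angle sum: from $\angle(uvx)\ge\pi/3$ in $\triangle(uvx)$ one gets $\angle(vxu)<2\pi/3$, and \emph{then} the contradiction with $\angle(uxv)\ge 2\pi/3$ goes through. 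Once you insert this one-line observation, your approach coincides with the paper's.
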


\begin{proof}
(i) Arguing by contradiction, assume that $u'v$ and $ux$ form a 
$Tie(u'; v, u, x)$; see Figure~\ref{fig:uniquetie}. 
From Lemma~\ref{lem:pico}, $\angle(xvu) \geq 2\pi/3$.  
Now consider the $Tie(u; v,x,y)$. From Lemma~\ref{lem:pico},
 $\angle(uvx) < 2\pi/3$, a contradiction.

(ii) From Lemma~\ref{lem:pico}, $\angle(uvx) \geq \pi/3$. 
Therefore, $\angle(vxu) < 2\pi/3$. However, 
the minimum angle $\angle(uxv)$ to form a $Tie(u'; x, u, v)$ is at 
least $2\pi/3$; see Figure~\ref{fig:uniquetie1}.

\begin{figure}[!ht]
  \centering \subfloat[$\{u',v\}$ and $\{u,x\}$ cannot form a
  $Tie$.]{\label{fig:uniquetie}\includegraphics[scale=.7]{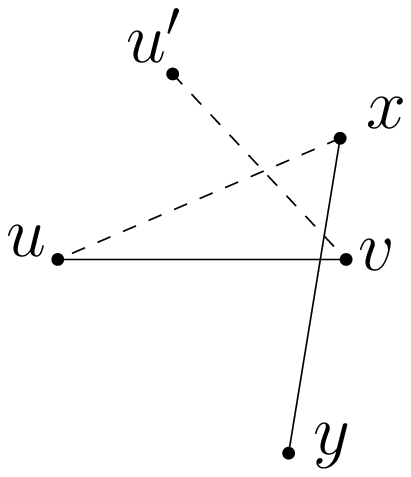}}
  \hspace{2cm}%
  \subfloat[$\{u',x\}$ and $\{u,v\}$ cannot form a
  $Tie$.]{\label{fig:uniquetie1}\includegraphics[scale=.7]{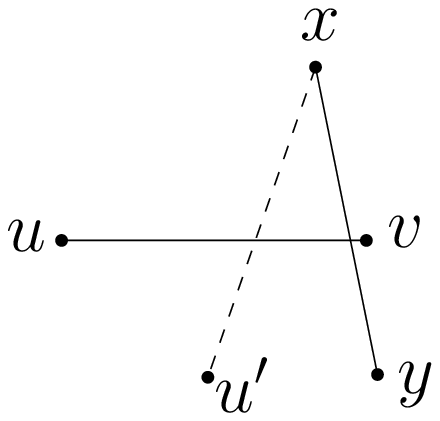}}
  \caption{If $u,v,x,y$ form a $Tie(u; v,x,y)$, then $u'$ cannot
    form a $Tie$ with either $v$ or $x$ or $y$ that overlaps
    $Tie(u; v,x,y)$. }
\end{figure}
\qed
\end{proof}

The following lemma shows that the points of a $Tie(u;v,x,y)$ are at distance at most $\sqrt{2}$ of each other.

\begin{lemma}
\label{lem:pico3}
  Let $u,v,x$, and $y$ be four points forming a $Tie(u;v,x,y)$ such that $\max\{d(u, v),$ $d(x,y)\} =1$. Then, $d(u, x)$ and $d(u, y)$ are bounded by $\sqrt{2}$.
\end{lemma}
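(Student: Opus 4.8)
The plan is to bound $d(u,x)$ by the law of cosines in the triangle $\triangle(uvx)$, using only the side $d(v,x)$ and the angle $\angle(uvx)$ at $v$; the point $y$ will be invoked only to kill the one bad configuration. I begin with two preliminary bounds. Let $p$ be the interior point where $uv$ crosses $xy$. Since $p$ lies between $u$ and $v$ we have $d(u,v)=d(u,p)+d(p,v)$, and since $x$ lies outside $D(u;d(u,v))$ the triangle inequality $d(u,x)\le d(u,p)+d(p,x)$ forces $d(p,x)\ge d(p,v)$; by the same argument $d(p,y)\ge d(p,v)$. Feeding this back, $d(v,x)\le d(v,p)+d(p,x)\le d(p,y)+d(p,x)=d(x,y)$, and likewise $d(v,y)\le d(x,y)$. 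Together with $\max\{d(u,v),d(x,y)\}=1$ this gives $d(u,v)\le 1$, $d(v,x)\le 1$, $d(v,y)\le 1$, and $d(x,y)\le 1$.

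Now I would split on $\alpha:=\angle(uvx)$. If $\alpha\le\pi/2$ then $\cos\alpha\ge 0$ and the law of cosines gives at once $d(u,x)^2=d(u,v)^2+d(v,x)^2-2\,d(u,v)\,d(v,x)\cos\alpha\le d(u,v)^2+d(v,x)^2\le 2$. The real work is the case $\alpha>\pi/2$: there the naive estimate $d(u,x)^2<d(u,v)^2+d(v,x)^2+d(u,v)\,d(v,x)\le 3$ (valid since $\cos\alpha>-1/2$ by Lemma~\ref{lem:pico}) is useless, and the position of $y$ must enter.

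So assume $\alpha>\pi/2$ and write $\beta:=\angle(uvy)$, $a:=d(u,v)$, $b:=d(v,x)$, $c:=d(v,y)$. Because $uv$ crosses $xy$, the ray $vu$ runs between the rays $vx$ and $vy$ (the crossing point lies on $xy$ and on the ray from $v$ through $u$), so $\angle(xvy)=\alpha+\beta$; this is $<\pi$, while Lemma~\ref{lem:pico} gives $\angle(xvy)\ge 2\pi/3$. Hence $\beta<\pi-\alpha<\pi/2$ (so $\cos\beta>0$) and $\cos(\alpha+\beta)\le -1/2$. I then combine two consequences. From $d(x,y)\le 1$ and the law of cosines in $\triangle(xvy)$, $1\ge b^2+c^2-2bc\cos(\alpha+\beta)\ge b^2+c^2+bc$. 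From $y\notin D(u;d(u,v))$ and the law of cosines in $\triangle(uvy)$, $a^2\le a^2+c^2-2ac\cos\beta$, i.e. $c\ge 2a\cos\beta$, so $bc\ge 2ab\cos\beta$. Chaining, $b^2+2ab\cos\beta\le b^2+bc\le 1$. Finally, $\beta<\pi-\alpha$ and the monotonicity of cosine on $[0,\pi]$ give $\cos\beta>\cos(\pi-\alpha)=-\cos\alpha=|\cos\alpha|$, whence $d(u,x)^2=a^2+b^2+2ab|\cos\alpha|\le a^2+b^2+2ab\cos\beta\le a^2+1\le 2$. This yields $d(u,x)\le\sqrt2$; the bound $d(u,y)\le\sqrt2$ follows verbatim with $x$ and $y$ interchanged, using $x\notin D(u;d(u,v))$ in place of $y\notin D(u;d(u,v))$.

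The one genuine obstacle is the case $\alpha>\pi/2$: one has to see that the estimate cannot come from $\triangle(uvx)$ alone, and that the correct mechanism is ``$\alpha$ large $\Rightarrow$ (via $\angle(xvy)<\pi$) $\beta$ small $\Rightarrow$ (via $y$ outside $D(u;d(u,v))$) the quantity $b^2+2ab|\cos\alpha|$ is at most $1$''. Once this coupling is spotted, the remainder is routine manipulation of the law of cosines.
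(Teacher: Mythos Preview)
Your argument is correct. The preliminary inequalities $d(v,x),d(v,y)\le d(x,y)$ are established cleanly, the case split on $\alpha=\angle(uvx)$ is sound, and in the obtuse case the chain $b^2+2ab|\cos\alpha|<b^2+2ab\cos\beta<b^2+bc\le b^2+c^2+bc\le 1$ goes through exactly as you wrote, using $c>2a\cos\beta$ from $y\notin D(u;d(u,v))$ and $\cos(\alpha+\beta)\le-\tfrac12$ from Lemma~\ref{lem:pico}. The symmetric argument for $d(u,y)$ is fine as well.

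The paper's proof takes a quite different and shorter route. Instead of working in the triangles $\triangle(uvx)$ and $\triangle(uvy)$ and invoking Lemma~\ref{lem:pico}, it introduces the auxiliary point $p$ where the chord $xy$ meets the circle $C(u;d(u,v))$ on the side of $y$, and observes geometrically that the angle $\angle(upx)$ is at most $\pi/2$ (because the direction $p\to x$ runs along the chord into the disk, hence on the same side of the tangent at $p$ as the inward radius $p\to u$). One application of the law of cosines in $\triangle(upx)$ with $d(u,p)=d(u,v)\le 1$ and $d(p,x)\le d(x,y)\le 1$ then gives $d(u,x)\le\sqrt2$ directly, with no case analysis. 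Two further points of contrast: the paper's argument never appeals to Lemma~\ref{lem:pico}, and in fact never uses the hypothesis $u\notin D(x;d(x,y))$ --- only that $x,y\notin D(u;d(u,v))$ and that $uv$ crosses $xy$ --- so it proves a marginally stronger statement. Your approach, by contrast, makes explicit how the bound is governed by the angular constraint $\angle(xvy)\ge 2\pi/3$, which is perhaps more informative about \emph{why} $\sqrt2$ is the right number, at the cost of a longer computation.
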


\begin{proof}
  Let $p$ be the intersection point of $xy$ and $C(u; d(u,v))$ closer to $y$, and $l$ be the tangent line at $p$;
 see Figure~\ref{fig:boundedlength}.  Since the angle that $up$ forms with $l$ is $\pi/2$, $\angle(upx) \leq \pi/2$. 
Therefore, $d(u, x) \leq \sqrt{2}$, since $\max(d(u, p), d(p, x)) \leq 1$. Similarly, we can prove that $d(u,y) \leq \sqrt{2}$.

\begin{figure}[!ht]
  \centering
  \includegraphics[scale=.7]{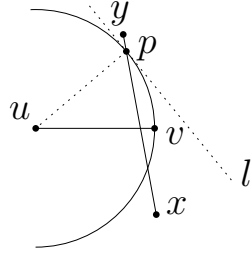}
  \caption{$d(u, x) \leq \sqrt{2}$ and $d(u, y) \leq \sqrt{2}$ in a $Tie(u;v,x,y)$.}
  \label{fig:boundedlength}
\end{figure}
\qed
\end{proof}

We conclude the preliminaries by introducing the concept of a $Bow$.

\begin{definition}
\label{def:bucket}
We say that four points $u, v, x, y$ form a {\em Bow}, denoted by $Bow(u, v, x, y)$, if $uv$ crosses $xy$, $d(u, y) \leq d(u,v) < d(u,x)$ and $d(v, x) \leq d(x,y) < d(u,x)$. 
See Figure~\ref{fig:bow}.
\end{definition}

\begin{figure}[!ht]
  \centering
\subfloat[$Tie(u;v,x,y)$ with tip $u$.]{\label{fig:tie}\includegraphics[scale=.55]{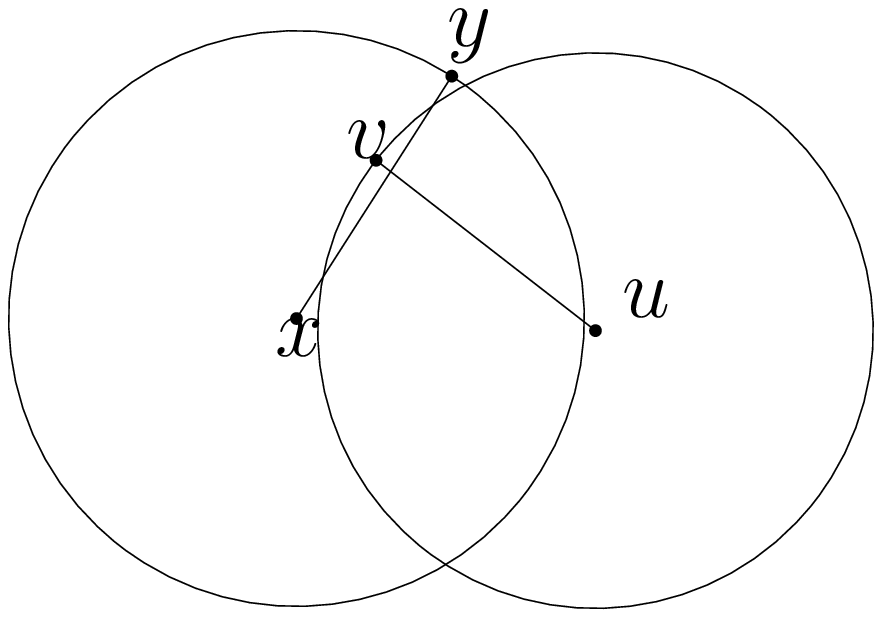}}
\hspace{2cm}%
\subfloat[$Bow(u,v,x,y)$]{\label{fig:bow}\includegraphics[scale=.55]{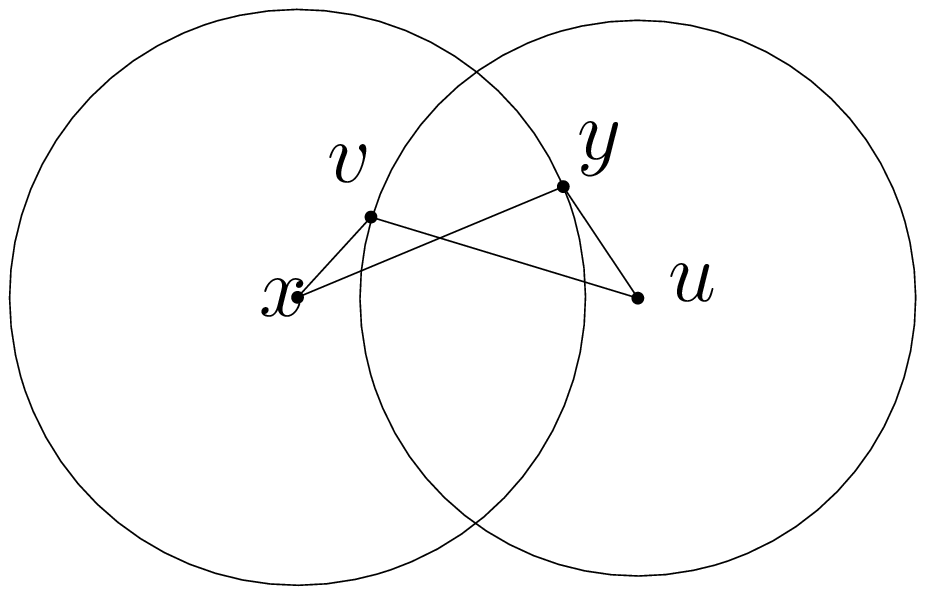}}
  \caption{Tie and Bow.}
\end{figure}

\section{Planar Subgraphs of Minimum Degree 2 of a UDG of Minimum Degree 2}
\label{sec:mindegree}

In this section we prove that if $U(P,1)$ has minimum degree 2, then $U(P,2)$ always 
contains a spanning geometric planar subgraph of minimum degree 2. We also show that the radius $2$ is best possible. 
Therefore in this section we assume $U(P,1)$ has minimum degree 2.

%\subsection{The Lower Bound}

The following theorem shows that the bound $2$ is the best possible. 

\begin{theorem}
\label{thm:lowerthm1}
For any real $\epsilon>0$ and any integer $k$, there exists a set $P$ of $4k$ points in the plane so that $U(P,1)$ has minimum degree 
2 but $U(P,2-\epsilon)$ has no geometric planar spanning subgraph of minimum degree 2.
\end{theorem}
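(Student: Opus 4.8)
The plan is to construct the set $P$ as $k$ disjoint, far-apart copies of a single gadget $Q$ on four points, so that $U(P,1)$ restricted to each copy is exactly the gadget graph and there are no edges between copies (place consecutive copies at distance more than $2$ apart). Since a spanning planar subgraph of minimum degree $2$ of the whole $U(P,2-\epsilon)$ must induce on each copy a subgraph of minimum degree $2$ (no vertex can borrow an edge from another copy), it suffices to design $Q$ so that $U(Q,1)$ has minimum degree $2$ but $U(Q,2-\epsilon)$ has no planar spanning subgraph of minimum degree $2$; taking $k$ copies then yields the $4k$-point example.

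For the gadget I would use four points that, in $U(Q,1)$, form a path or a very sparse graph in which each of the four vertices has degree exactly $2$, forcing the minimum-degree-$2$ requirement to essentially pin down which edges must be present. Concretely, arrange $u,v,x,y$ so that the only pairs at distance $\le 1$ are the ``cycle'' edges $\{u,v\},\{v,y\},\{y,x\},\{x,u\}$ — i.e. $U(Q,1)=C_4$ on these four vertices — while the two diagonals $\{u,y\}$ and $\{v,x\}$ have length strictly greater than $1$ but can be made to have length just under $2$. Then place the four points so that the $C_4$ is drawn as a crossing (self-intersecting) quadrilateral: the edges $\{u,v\}$ and $\{x,y\}$ cross. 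In $U(Q,2-\epsilon)$ the available edges are the four $C_4$-edges plus possibly the two diagonals (depending on how close their length is to $2$); with a careful metric choice we can ensure the diagonals also have length $<2-\epsilon$, so the full edge set available is $K_4$. A minimum-degree-$2$ spanning subgraph of $K_4$ on four vertices must contain at least four edges and hence a cycle through all four vertices or two vertex-disjoint... — but on four points there is no pair of disjoint edges forming a $2$-regular graph, so it must contain a $4$-cycle, and every $4$-cycle on these four points, as one checks by the placement, is drawn with a crossing; moreover any such subgraph with $\ge 4$ edges either is one of the three $4$-cycles or contains a triangle plus a pendant-free extra edge, and in all cases a crossing is unavoidable. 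The key geometric claim to verify is precisely this: for the chosen coordinates, \emph{every} subgraph of $K_4$ on $\{u,v,x,y\}$ that has minimum degree $2$ contains two edges that cross.

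To make that claim hold I would exploit that there are only three perfect-matching pairs of ``opposite'' edges in $K_4$ and only three $4$-cycles, so it is a finite check: choose the positions (for instance $u,v,x,y$ near the vertices of a thin ``bowtie'' configuration, two points on each side of a narrow waist) so that (a) each of the three $4$-cycles, when its edges are drawn as straight segments, has exactly one crossing pair among its edges, and (b) the triangle $K_3$ on any three of the four points together with the one remaining edge still creates a crossing — which is automatic once the fourth point lies ``behind'' the crossing of the other two. Then any minimum-degree-$2$ subgraph, having $\ge 4$ edges out of the $6$ edges of $K_4$, necessarily includes one of these forced crossing pairs. The main obstacle is calibrating the coordinates so that simultaneously (i) the $U(Q,1)$ graph is exactly $C_4$ with the crossing drawn (short edges of length $\le 1$, diagonals $>1$), (ii) every diagonal length is $<2-\epsilon$ so that the adversary gains no non-crossing escape route in $U(Q,2-\epsilon)$, and (iii) the case analysis over all minimum-degree-$2$ subgraphs of $K_4$ genuinely produces a crossing in every case; this is a small but delicate continuity/compactness argument, and I would finish it by writing down one explicit family of coordinates parameterized by a small $\delta>0$, letting $\delta\to 0$ to absorb the dependence on $\epsilon$, and then replicating the gadget $k$ times with inter-copy distance $>2$ to obtain the claimed set of $4k$ points.
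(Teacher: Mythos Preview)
Your overall strategy --- take $k$ far-apart copies of a single four-point gadget --- is exactly what the paper does, so the reduction to a single gadget is fine. The gap is in the gadget itself.

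You propose to make \emph{all six} pairwise distances at most $2-\epsilon$, so that $U(Q,2-\epsilon)$ is the complete graph $K_4$, and then to argue that every minimum-degree-$2$ subgraph of $K_4$ on these four points has a crossing. This is impossible for \emph{any} placement of four points in general position. If the four points are in convex position, then the convex-hull $4$-cycle is a crossing-free $2$-regular spanning subgraph. If they are not in convex position, one point lies inside the triangle formed by the other three, and then no two of the six straight-line edges of $K_4$ cross at all (each segment from the interior point to a vertex stays strictly on that vertex's side of the opposite edge), so again a planar $4$-cycle exists. Either way, once $K_4$ is fully available in $U(Q,2-\epsilon)$, a planar minimum-degree-$2$ subgraph exists, and your claim~(a) that ``each of the three $4$-cycles \ldots\ has exactly one crossing pair'' cannot hold. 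Note also that your bowtie drawing of $C_4$ forces the four points into convex position, so you are precisely in the first case.

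The fix --- and this is what the paper does --- is the opposite of what you attempted: you must make one specific pair $\{u,v\}$ have length \emph{greater} than $2-\epsilon$ (but less than $2$), so that this edge is \emph{not} available in $U(Q,2-\epsilon)$. Concretely, place $u$ and $v$ at distance just under $2$ and put the remaining two points $a,b$ very close to the midpoint of $uv$ and on the \emph{same side} of the line $uv$, so that the convex-hull order is $u,a,b,v$. Then in $U(Q,1)$ every vertex has degree $\ge 2$ (each of $u,v$ sees $a$ and $b$; and $a,b$ see each other and both endpoints), while in $U(Q,2-\epsilon)$ the edge $\{u,v\}$ is missing. Now $u$ and $v$ each have only the two neighbours $a,b$, so any minimum-degree-$2$ subgraph must contain all four edges $ua,ub,va,vb$; but with the hull order $u,a,b,v$, the segments $ub$ and $va$ are the two diagonals of the convex quadrilateral and therefore cross. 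That is the gadget the paper's figure depicts, and the edge $\{u,v\}$ of length approaching $2$ is exactly the witness that the factor $2$ is tight.
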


\begin{proof}
  It is not difficult to see that the component depicted in Figure~\ref{fig:dmin2} requires $\{u,v\}$ to create a planar graph of 
degree two. To create a family of UDGs with $4k$ vertices, it is enough to consider $k$ disconnected components.

  \begin{figure}[!ht]
    \begin{center}
      \includegraphics[scale=.7]{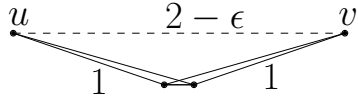}
    \end{center}
    \caption{UDG of minimum degree two that requires scaling factor of $2-\epsilon$.}
    \label{fig:dmin2}
  \end{figure}
\qed
\end{proof}

%\subsection{The Upper Bound}

Let $T = (P, E)$ be the minimum spanning forest (MSF) (or nearest neighborhood graph) of $U(P,1)$ formed by connecting each vertex with its 
 neareast neighbor. Recall that $U(P,1)$ has minimum degree 2 but it is not guaranteed to be connected, and  that any 
two vertices in different components are at distance more than $1$.  Let $u$ be a leaf of $T$ and $v$ be the second nearest neighbor of $u$. (If there exist more than one, then choose any one among them.) The directed edge $(u,v)$ is defined as a second nearest neighbor edge (SNN edge). Let $E'$ be the set of SNN edges for all leaves of $T$.  Observe that $E \cap E' = \emptyset$, since the nearest neighborhood graph is a subgraph of $U(P,1)$ and SNN edges of $E'$ are considered for leaves of $T$.

Before giving the main theorem we provide some lemmas that are required for the proof.  The following lemma shows that if an SNN edge $(x,y)\in E'$ crosses an edge $\{u,v\}$ of $T$, then the four vertices form a $Tie(u; v, x, y)$.

\begin{lemma}
\label{lem:crossmst} 
Let $(x,y)\in E'$ be an SNN edge that crosses an edge $\{u, v\} \in T$. Then, the four vertices form a $Tie(u; v, x, y)$ such that either $\{u, x\} \in T$ or $\{v, x\} \in T$.  Moreover, the quadrangle $uxvy$ is empty.
\end{lemma}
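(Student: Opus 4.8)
The plan is to analyze the crossing of the SNN edge $(x,y)$ with the tree edge $\{u,v\}$ by exploiting the defining properties of both. Recall $(x,y) \in E'$ means $x$ is a leaf of $T$ and $y$ is its \emph{second} nearest neighbor; let $w$ denote the (unique) nearest neighbor of $x$, so $\{x,w\} \in T$ and $d(x,w) \le d(x,y)$, with in fact $d(x,w) \le d(x,p)$ for every point $p \ne w$. The edge $\{u,v\} \in T$ is an MSF/nearest-neighbor edge, so WLOG $v$ is the nearest neighbor of $u$, giving $d(u,v) \le d(u,p)$ for all $p \ne v$. First I would establish the four inequalities in Definition~\ref{def:pico}: that $x,y \notin D(u; d(u,v))$, i.e. $d(u,x) > d(u,v)$ and $d(u,y) > d(u,v)$; and that $u \notin D(x; d(x,y))$, i.e. $d(x,u) > d(x,y)$. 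The inequality $d(u,x) > d(u,v)$ and $d(u,y) > d(u,v)$ would follow if $v$ is $u$'s nearest neighbor (with the case $x=v$ or $y=v$ impossible since then the "crossing" would share an endpoint). The genuinely delicate inequality is $d(x,u) > d(x,y)$: since $y$ is only the \emph{second} nearest neighbor of $x$, I cannot immediately rule out $d(x,u) \le d(x,y)$, and I would have to argue that $d(x,u) \le d(x,y)$ together with $u \ne w$ would contradict $y$ being the second nearest (because then both $w$ and $u$ would be closer to $x$ than $y$), while $u = w$ is the case that produces the conclusion $\{u,x\}=\{w,x\} \in T$.

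The next step is to pin down which of $\{u,x\}$ or $\{v,x\}$ lies in $T$. Here I would use the crossing structure: since $xy$ crosses $uv$, at least one of the segments $xu$, $xv$ is "short" relative to the quadrilateral, and I would invoke the nearest-neighbor property of $x$. Concretely, $x$'s nearest neighbor $w$ satisfies $d(x,w) \le d(x,y)$; I want to show $w \in \{u,v\}$. Suppose not. Then the disk $D(x; d(x,w))$ is nonempty of the point $w$ but I would argue, using the $Tie$ geometry just established (angles at $v$ bounded via Lemma~\ref{lem:pico}, and distance bounds via Lemma~\ref{lem:pico3}), that $w$ would have to be so positioned as to create a second crossing or a contradiction with minimality — more precisely, either $u$ or $v$ would end up inside $D(x; d(x,w))$, contradicting that $w$ (hence something at distance $\le d(x,w)$) is the true nearest neighbor. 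Symmetrically one shows it cannot be the case that neither $u$ nor $v$ is close enough. This dichotomy step — showing the nearest neighbor of the leaf $x$ must be one of the two endpoints of the crossed tree edge — is what I expect to be the main obstacle, since it requires combining the angle bounds of Lemma~\ref{lem:pico} with the MSF minimality in a careful geometric argument rather than a one-line inequality.

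Finally, for the emptiness of the quadrangle $uxvy$: I would argue by contradiction, supposing a point $z$ lies inside the (convex, since $uv$ and $xy$ cross) quadrilateral $uxvy$. Then $z$ is inside one of the four triangles cut by the diagonals $uv$, $xy$; in each case $z$ is closer to one of $u,v,x,y$ than the relevant "defining" distance. If $z$ is closer to $u$ than $v$ is, that contradicts $v$ being $u$'s nearest neighbor; if $z$ is closer to $x$ than $y$ and also closer to $x$ than $w$, that contradicts $w$ being $x$'s nearest neighbor (or $y$ its second-nearest, once we account for $z$ potentially equaling $w$). The only subtlety is handling the possibility $z = w$ separately, but since $w \in \{u,v\}$ by the previous step, $w$ is a vertex of the quadrilateral and not interior to it, so no interior point coincides with $w$. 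Assembling these pieces — the four $Tie$ inequalities, the $\{u,x\}$-or-$\{v,x\}$ dichotomy, and the emptiness — completes the proof.
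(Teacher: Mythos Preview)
Your plan has a genuine gap at precisely the step you flag as ``the main obstacle'': showing that $x$'s nearest neighbor $w$ must be one of $u,v$. Your sketched contradiction --- that ``either $u$ or $v$ would end up inside $D(x; d(x,w))$'' --- is backwards: if $w\notin\{u,v\}$ then by definition of nearest neighbor $d(x,u)>d(x,w)$ and $d(x,v)>d(x,w)$, so neither $u$ nor $v$ lies in that disk. Invoking Lemmas~\ref{lem:pico} and~\ref{lem:pico3} does not help here, and no amount of angle-bound juggling produces the conclusion you state. The paper's argument is different and is the idea you are missing: if $w\notin\{u,v\}$ then (since $y$ is the second nearest neighbor of $x$) both $u$ and $v$ lie outside $D(x;d(x,y))$; because $uv$ crosses $xy$ this forces $\angle(uyv)\ge \pi/2$, so $d(u,v)>\max\{d(u,y),d(v,y)\}$, and then swapping $\{u,v\}$ for $\{u,y\}$ or $\{v,y\}$ contradicts the minimality of $T$. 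This MSF edge-swap is the real mechanism, and it is absent from your outline.

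There is also a labeling mismatch. You fix $u$ so that $v$ is $u$'s nearest neighbor, which gives $d(u,x)>d(u,v)$ and $d(u,y)>d(u,v)$ for free. But the tip of the $Tie$ is determined by which endpoint is \emph{not} adjacent to $x$; if it happens that $u=w$ (your own case split), the $Tie$ must be $Tie(v;u,x,y)$, and now you need $d(v,x)>d(v,u)$ and $d(v,y)>d(v,u)$, which your labeling does not supply. The paper avoids this by first proving $w\in\{u,v\}$, then labeling so that $w=v$, and only then verifying the $Tie$ inequalities for $Tie(u;v,x,y)$ --- and even then the inequality $d(u,v)<d(u,y)$ requires its own argument by contradiction (comparing angles in $\triangle(uvy)$), not merely nearest-neighbor bookkeeping. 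Finally, your emptiness argument (``$z$ is closer to one of $u,v,x,y$ than the relevant defining distance'') is not valid as stated: lying in one of the four sub-triangles cut by the diagonals does not by itself bound $d(u,z)$ or $d(x,z)$ against $d(u,v)$ or $d(x,y)$. The paper treats the two halves $\triangle(uvx)$ and $\triangle(uvy)$ separately, using the $\pi/3$ angular separation of consecutive MST edges for the first, and for the second showing that any interior point $p$ would satisfy $d(x,p)<d(x,y)$, contradicting the SNN choice of $y$.
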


\begin{proof}
First we will show that if $(x,y)$ crosses $\{u, v\}$ then either $\{u, x\} \in T$ or $\{v, x\} \in T$.  For the sake of contradiction, assume that neither $\{u,x\} \notin T$ nor $\{v,x\} \notin T$.  Observe that $u$ and $v$ are outside $D(x; d(x,y))$, otherwise $(x,y)$ would not be the SNN edge; see Figure~\ref{fig:crossmst}.  Therefore, $\angle(vyu) \geq \pi/2$ since $(x,y)$ crosses $\{u,v\}$.  Hence, $d(u,v)$ is greater than $d(u,y)$ and $d(v,y)$.  This contradicts the minimality of MSF  $T$, since replacing $\{u,v\}$ by either $\{u,y\}$ or $\{v,y\}$ results in a spanning forest of $U(P,1)$ of smaller weight.

To show that the four vertices form a $Tie(u; v, x, y$), assume that $\{v, x\} \in T$. Observe that $d(u, x) > d(x, y) > \max\{d(v, x), d(v, y)\}$ since $y$ is the second nearest neighbor of $x$ and $\angle(xvy) \geq \pi/2$; see Figure~\ref{fig:snntwocross}. It is not difficult to see that $d(u, v) < d(u, x)$ (Otherwise we can obtain a spanning forest of smaller weight by replacing $\{u,v\}$ with $\{u, x\}$.) To prove that $d(u,v) < d(u, y)$ assume by contradiction that $d(u,v) > d(u, y)$. 
Hence, $\angle(yuv)$ is the largest angle in $\triangle(uvy)$ since $d(u, v) < d(v,y)$ (Otherwise we can obtain a spanning forest of smaller weight by replacing $\{u,v\}$ with either $\{u, y\}$ or $\{v, y\}$.) Therefore, $\angle(yux) > \angle(yuv)$ which implies that $d(x, y) > d(u,x)$. This is a contradiction since $d(x, y) < d(u,x)$.

To prove that $uxvy$ is empty, we consider independently $\triangle(uvx)$ and $\triangle(uvy)$.  First consider $\triangle(uvx)$.  It is known that the angle that a vertex forms with two consecutive neighbors in $T$ is at least $\pi/3$ and the triangle is empty.  Therefore, $v$ does not have a neighbor in the sector $\angle(xvu)$ since by Lemma~\ref{lem:pico} $\angle(uvx) < 2\pi/3$. Therefore, $\triangle(uvx)$ is empty.  Now we consider $\triangle(uvy)$.  Assume by contradiction that exists a point $p$ in $\triangle(uvy)$ as depicted in Figure~\ref{fig:crossmst3}.  Observe that $\angle(uvp) > \pi/3$ (Otherwise we can replace $\{u, v\}$ with either $\{u, p\}$ or $\{v,p\}$.) Therefore, $\angle(xvp) < \angle(xvy)$ and $d(x, p) < d(x,y)$ since $d(v, p) \leq d(v, y)$ which contradicts the SNN edge definition.

\begin{figure}[!ht]
  \centering \subfloat[$\{u, v\} \notin
  D(x;d(x,y))$]{\label{fig:crossmst}\includegraphics[scale=.65]{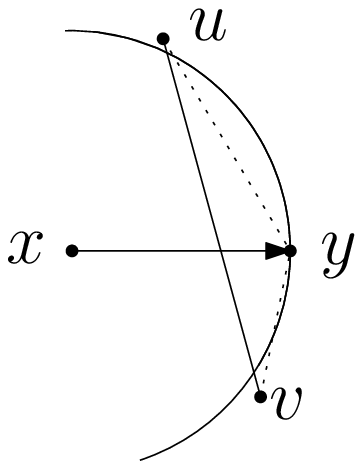}}
  \hspace{1cm}%
  \subfloat[A SNN edge that crosses an edge of $T$ forms a
  $Tie$.]{\label{fig:snntwocross}\includegraphics[scale=.65]{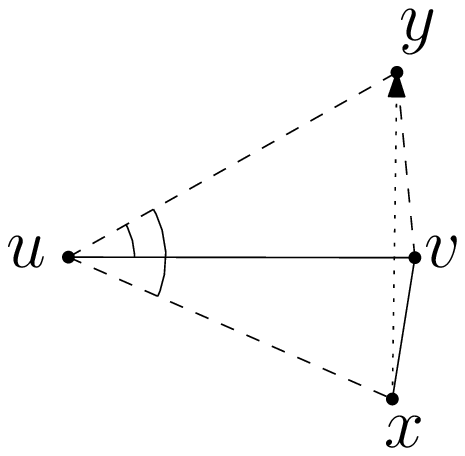}}
  \hspace{1cm}%
  \subfloat[$uxvy$ is
  empty.]{\label{fig:crossmst3}\includegraphics[scale=.65]{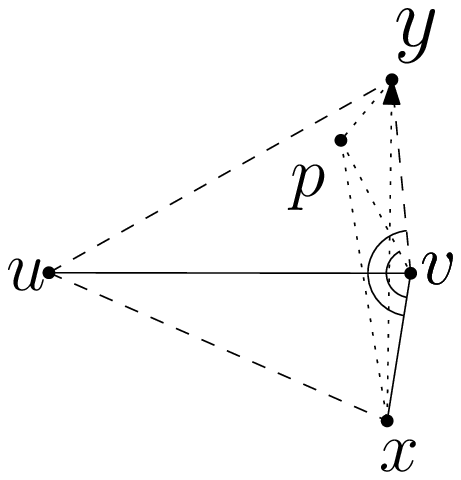}}
  \caption{A SNN edge crossing an edge of $T$}
\end{figure}
\qed
\end{proof}

As a consequence of Lemma~\ref{lem:crossmst}, an SNN edge crosses at most one edge of $T$, since the angle that a vertex forms with two consecutive neighbors in $T$ is at least $\pi/3$.  The following lemma will help to characterize crossings between SNN edges.

\begin{lemma}
\label{lem:crossvirtual}
Let $(u,v), (u',v')\in E'$ be two crossing SNN edges. Then $\{u', v\}
  \in T$.
\end{lemma}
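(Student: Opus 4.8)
The plan is to argue by contradiction, assuming that $\{u',v\}\notin T$, and to derive a violation of the defining property of SNN edges (each SNN edge $(u,v)$ comes from a leaf $u$ of $T$, with $v$ the \emph{second} nearest neighbor of $u$, so in particular the nearest neighbor $w$ of $u$ satisfies $d(u,w)\le d(u,v)$ and both $w$ and $v$ lie outside the relevant smaller disks). I would first set up the picture: $(u,v)$ and $(u',v')$ cross, so the four points $u,v',u',v$ occur in convex position in this cyclic order around the crossing point, and the segment $uv$ separates $u'$ from $v'$ while $u'v'$ separates $u$ from $v$. I would then record the ``outside the disk'' facts that come for free: since $(u,v)$ is an SNN edge of the leaf $u$, every other point — in particular $u'$ and $v'$ — lies outside $D(u;d(u,v))$ (otherwise it would be the nearest or second nearest neighbor of $u$ instead of $v$), hence $d(u,u')>d(u,v)$ and $d(u,v')>d(u,v)$; symmetrically $d(u',u)>d(u',v')$ and $d(u',v)>d(u',v')$.

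Next I would look at which of the two ``short'' edges of the crossing quadrilateral — $\{u,v'\}$, $\{v,v'\}$ on one side and $\{u,u'\}$... — no: the relevant pairs are $uu'$ and $vv'$ on the ``crossing'' side. The key observation should be that in the triangle formed at $v$ by the three points $u,v,u'$ (or at $u$ by $v,u,u'$), the angle $\angle(uvu')$ or $\angle(u'vu)$ is large — at least $\pi/2$ or so — because $u'$ is forced outside both $D(u;d(u,v))$ and $D(u';d(u',v'))$, while $v$ lies on segment $uv$ on one side of the crossing. From a large angle at $v$ in $\triangle(u v u')$ one concludes $d(u,u')>d(v,u')$, i.e. $u'$ is strictly closer to $v$ than to $u$. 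But $u$ is a leaf of $T$ and $v$ is $u$'s second nearest neighbor; combining $d(u,v)<d(u,u')$ with a nearest-neighbor/minimality argument on $T$, one sees that $\{u',v\}$ would have to be an edge of the nearest-neighbor forest $T$ (the edge $\{v,u'\}$ is shorter than any tree edge one would have to remove to accommodate it, exactly as in the proof of Lemma~\ref{lem:crossmst}), contradicting the assumption $\{u',v\}\notin T$. So the contradiction is that $\{u',v\}$ is in fact forced to be a tree edge.

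The main obstacle I anticipate is the careful angle bookkeeping that pins down \emph{which} angle of the crossing quadrilateral is large and at \emph{which} of the four vertices the contradiction is extracted — the situation is not symmetric in $u\leftrightarrow u'$ because $v$ is only the \emph{second} nearest neighbor of $u$ (there is a nearer neighbor $w$ of $u$ to keep track of), whereas $v'$ is the second nearest neighbor of $u'$, so one must be careful not to assume $d(u,v)$ is minimal among edges at $u$. I expect to need a short case analysis depending on the relative order of the lengths $d(u,v)$ and $d(u',v')$ (equivalently, which of the two crossing segments is ``longer''), and in each case to use the outside-the-disk facts to force a $\ge\pi/2$ angle at $v$ in the appropriate triangle, then invoke minimality of the nearest-neighbor forest $T$ exactly as in Lemma~\ref{lem:crossmst} to conclude $\{u',v\}\in T$. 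The routine trigonometric inequalities (angle at least $\pi/2$ $\Rightarrow$ opposite side longest) I would not spell out in detail.
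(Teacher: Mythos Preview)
Your setup is exactly right: assume $\{u',v\}\notin T$, and record the ``outside the disk'' facts.  In particular you correctly note that, under this hypothesis, the SNN property of $(u',v')$ gives $d(u',v)\ge d(u',v')$ (since $v$ is then neither the nearest nor the second-nearest neighbour of $u'$).  But having written this down, you do not use it; instead you head towards the angle $\angle(uvu')$ in the triangle $uvu'$, aim for the inequality $d(u,u')>d(u',v)$, and then try to force $\{u',v\}\in T$ by an MST-style replacement argument.  That last implication is the gap: knowing that $u'$ is closer to $v$ than to $u$ does \emph{not} make $v$ the nearest neighbour of $u'$ in the nearest-neighbour forest---$u'$ may well have some other point still closer.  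The analogy with Lemma~\ref{lem:crossmst} fails because there the contradiction came from replacing a tree edge by a strictly shorter edge in the same cut, whereas here you have no tree edge in hand to replace.

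The paper's proof is shorter and uses the inequality you already recorded.  It looks at the \emph{other} angle at $v$, namely $\angle(u'\,v\,v')$: since $u'$ and $v'$ both lie outside $D(u;d(u,v))$ while the segment $u'v'$ crosses $uv$, this angle exceeds $\pi/2$, and hence $d(u',v')>\max(d(u',v),d(v,v'))$.  In particular $d(u',v')>d(u',v)$, flatly contradicting the SNN inequality $d(u',v)\ge d(u',v')$ you wrote down at the start.  No case analysis on the relative lengths of $uv$ and $u'v'$ is needed, and no tree-minimality argument is invoked---the contradiction is directly with the definition of a second-nearest-neighbour edge.  So: keep your first paragraph, change the target angle from $\angle(uvu')$ to $\angle(u'vv')$, and finish in one line.
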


\begin{proof}
Assume that $\{u', v\},\{u,v'\} \notin T$, then $u'$ and $v'$ are not in $D(u; d(u,v))$ as depicted in Figure~\ref{fig:crosssnn}. Observe that if either $u'$ or $v'$ is in $D(u; d(u,v))$, then $(u,v)$ would not be the SNN edge. Therefore, $d(u',v') > \max(d(u',v), d(v,v'))$ since $\angle(v'vu') > \pi/2$ and $(u,v)$ crosses $(u',v')$. This is a contradiction since $\{u',v\} \notin T$.

\begin{figure}[!ht]
  \centering
  \includegraphics[scale=.7]{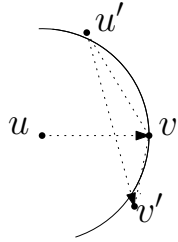}
  \caption{Two crossing SNN edges}
  \label{fig:crosssnn}
\end{figure}
\qed
\end{proof}

\begin{lemma}
\label{lem:crossvirtual2}
Let $(u,v), (u',v')\in E'$ be two crossing SNN edges. \\
(i) If $\{u, v'\},\{u',v\} \in T$, then they form a $Bow(u, v,
  u', v')$ such that the quadrangle $uv'vu'$ is empty. \\
(ii) If $\{u', v\} \in T$ and $\{u, v'\} \notin T$, then they form a
  $Tie(u; v, u', v')$ such that the quadrangle $uu'vv'$ is either
  empty or contains the neighbor of $u$ in $T$.
\end{lemma}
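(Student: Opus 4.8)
\medskip

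The plan is to handle the two cases separately, in each case first pinning down the distance inequalities required by the relevant definition ($Bow$ or $Tie$), and then arguing emptiness by a local packing argument that exploits the fact that any two consecutive neighbors of a vertex in the minimum spanning forest $T$ subtend an angle of at least $\pi/3$ with an empty enclosed triangle.

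\medskip

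\emph{Case (i): $\{u,v'\},\{u',v\}\in T$.} First I would record what we already know: since $(u,v)$ and $(u',v')$ are SNN edges that cross, neither $u'$ nor $v'$ lies in $D(u;d(u,v))$ and neither $u$ nor $v$ lies in $D(u';d(u',v'))$ (otherwise the corresponding SNN edge would have been chosen shorter). Thus $d(u,v)<d(u,u')$ and $d(u',v')<d(u',u)$, i.e.\ $d(u,u')$ exceeds both $d(u,v)$ and $d(u',v')$ — this is exactly the ``long diagonal'' condition $d(u,y)\le d(u,v)<d(u,x)$ and $d(v,x)\le d(x,y)<d(u,x)$ in Definition~\ref{def:bucket} once we match $x=u'$, $y=v'$. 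To get the remaining two inequalities $d(u,v')\le d(u,v)$ and $d(u',v)\le d(u',v')$, I would use that $\{u,v'\}\in T$ together with the minimality of $T$: if $d(u,v')>d(u,v)$, we could not have both edges in $T$ without creating a cheaper forest by an exchange argument on the cycle they would form (or, more directly, $v'\notin D(u;d(u,v))$ already forces the needed comparison once we note $v'$ is a $T$-neighbor of $u$, hence one of its two nearest, so its distance is controlled). For emptiness of the quadrangle $uv'vu'$: split it into $\triangle(uvu')$ and $\triangle(uvv')$ — wait, the right split is into the four $T$-triangles. Concretely, $\triangle(uv'v)$ and $\triangle(uv'u')$ are each ``angle-$\pi/3$ empty'' triangles at the vertex where two consecutive $T$-edges meet, and by the $Tie$-angle bound (Lemma~\ref{lem:pico}, applied to the $Tie$ structure lurking in each half) the complementary angular sectors at $u,v,u',v'$ cannot contain any further point without contradicting either SNN-minimality of $(u,v)$ or $(u',v')$, or MSF-minimality of $T$. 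So the whole quadrangle is empty.

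\medskip

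\emph{Case (ii): $\{u',v\}\in T$, $\{u,v'\}\notin T$.} Here I would mimic the proof of Lemma~\ref{lem:crossmst}. The crossing gives $\angle(u'vv')\ge\pi/2$ (from $u',v\notin D(u';d(u',v'))\Rightarrow \angle$ at $v$ is large), hence $d(u,u')>d(u,v')>\max\{d(u',v),d(v,v')\}$, using that $v'$ is the second-nearest neighbor of $u'$ to get $d(u',v')$ above the two short sides. Then, exactly as before, $d(u,v)<d(u,u')$ (else exchange $\{u,v\}$ for $\{u,u'\}$ in the forest) and $d(u,v)<d(u,v')$ (assume not; then $\angle(v'uv)$ is largest in $\triangle(uvv')$ since $d(u,v)<d(v,v')$ by MSF-minimality, so $\angle(v'uu')>\angle(v'uv)$ forces $d(u',v')>d(u,u')$, contradicting the line above). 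This is the defining data of $Tie(u;v,u',v')$. For the quadrangle $uu'vv'$: the triangle $\triangle(u'vu)$ side is empty by the MSF argument of Lemma~\ref{lem:crossmst} (a point there would let us reroute $\{u,v\}$ or $\{u',v\}$ more cheaply, using $\angle$ at $v$ at least $\pi/3$), and on the $\triangle(uvv')$ side a point $p$ there either lies in the angular sector of a $T$-neighbor of $u$ — in which case it can only be \emph{the} neighbor of $u$ in $T$, since two consecutive $T$-neighbors of $u$ span at least $\pi/3$ and $\angle(vuu')<2\pi/3$ by Lemma~\ref{lem:pico} leaves room for exactly one — or it gives $\angle(uvp)<\pi/3$ or $\angle(u'vp)<\pi/3$, either of which yields a cheaper forest or contradicts the SNN choice of $(u,v)$. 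Hence the quadrangle is empty except possibly for that single $T$-neighbor of $u$.

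\medskip

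The main obstacle I anticipate is the emptiness claims, not the distance inequalities: one must carefully partition each quadrangle into the correct triangles, match each sub-triangle to the right minimality source (MSF exchange vs.\ SNN second-nearest choice), and in case (ii) give a clean reason why the lone admissible interior point is forced to be the $T$-neighbor of $u$ rather than some other point — this last point is where the $\pi/3$ angular-separation fact about consecutive $T$-neighbors and the $<2\pi/3$ bound from Lemma~\ref{lem:pico} have to be combined just so. The distance inequalities themselves are routine once the ``not inside the disk'' observations for SNN edges are in place.
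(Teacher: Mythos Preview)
There is a genuine gap in your distance-inequality arguments, and it stems from a single missing observation: by definition, SNN edges are only assigned to \emph{leaves} of $T$, so both $u$ and $u'$ are leaves. In particular, each has exactly one $T$-neighbor, and that neighbor is its nearest point in $P$. In case~(i) this means $v'$ is the nearest neighbor of $u$ and $v$ its second nearest, giving $d(u,v')<d(u,v)<d(u,u')$ immediately (and symmetrically for $u'$). Your claim that ``neither $u'$ nor $v'$ lies in $D(u;d(u,v))$'' is therefore false for $v'$; the alternative justification you offer (``$v'\notin D(u;d(u,v))$ forces the needed comparison'') points the wrong way, and the MSF-exchange argument you sketch doesn't apply because $\{u,v\}\in E'$, not $E$, so no cycle is created. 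The same confusion recurs in case~(ii): you write ``$u',v\notin D(u';d(u',v'))$'', but $v$ is the nearest neighbor of the leaf $u'$ and hence certainly lies inside that disk. The paper gets the $Tie$ inequalities in one line from the SNN definition: $v$ is $u$'s second nearest and $u',v'$ are neither its first nor second, so $d(u,v)\le\min\{d(u,u'),d(u,v')\}$; likewise $d(u',v')<d(u',u)$.

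For emptiness you propose a triangular decomposition with angular/MSF arguments, and you correctly flag this as the hard part. The paper takes a cleaner and quite different route: it sets $R=D(u;d(u,v))\cup D(u';d(u',v'))$ and observes that any point of $P$ in $R$ is within the second-nearest radius of $u$ or of $u'$, hence must be one of $u,v,u',v'$ in case~(i), or one of these together with the (unique) $T$-neighbor of $u$ in case~(ii). One then only needs that the quadrangle is contained in $R$; in case~(ii) this is checked by a short angle computation showing the one potentially uncovered vertex of the quadrangle lies in $D(u';d(u',v'))$. This disk-union argument bypasses the delicate sector bookkeeping you anticipate and makes the ``lone admissible interior point'' in case~(ii) fall out automatically.
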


\begin{proof}
(i) Let $\{u, v'\} \in T$ and $\{u',v\} \in T$.  Clearly, $d(u, u') > d(u, v) > d(u, v')$, since $v'$ is the nearest neighbor of $u$ and $v$ the second.  Similarly, $d(u, u') > d(u', v') > d(u', v)$.  Therefore, the four vertices form a $Bow(u, v, u', v')$. To prove that the quadrangle $uv'vu'$ is empty consider $R = D(u; d(u, v)) \cup D(u'; d(u',v'))$ as depicted in Figure~\ref{fig:crossbow}.  Obviously any point inside $R$ is closer to either $u$ or $u'$. Therefore, $R$ contains only $u,v,u',v'$.

(ii) Let $\{u',v\} \in T$ and $\{u, v'\} \notin T$. From the definition of SNN edge, $d(u,v) \leq \min\{d(u,  u'), d(u, v')\}$ and $d(u',v')<d(u, u')$. Therefore, the four vertices form a $Tie(u; v, u', v')$. To prove that the quadrangle may contain at most one point $p$ such that $\{u, p\} \in T$, consider $R = D(u; d(u, v)) \cup D(u'; d(u',v'))$ as depicted in Figure~\ref{fig:crosstie}.  Obviously any point inside $R$ is closer to either $u$ or $u'$. Therefore, it contains only the nearest neighbors of $u$ and $u'$. Further, $v$ is the nearest neighbor of $u'$. Therefore, $p \in R$ where $\{u, p\} \in T$.  It remains to prove that $R$ contains the quadrangle $uu'vv'$. Let $a$ be the intersection point of $\{u,v'\}$ and $C(u; d(u, v))$. It is enough to prove that $a \in D(u'; d(u', v'))$.  However, $\angle(u'va) < \angle(u'vv') $ and $\angle(avv') < \pi/3$.  Therefore, $d(u', a) < d(u', v')$.

  \begin{figure}[!ht]
    \centering \subfloat[$\{\{u, v'\},\{u',v\}\} \in
    T$]{\label{fig:crossbow}\includegraphics[scale=.7]{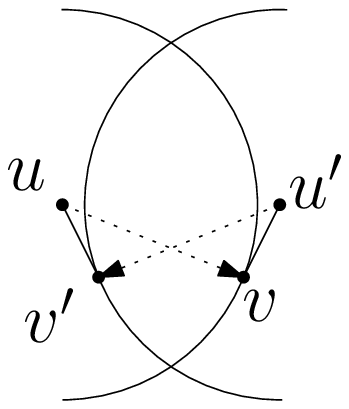}}
    \hspace{2cm}%
    \subfloat[$\{u', v\} \in T$ and $\{u, v'\} \notin
    T$]{\label{fig:crosstie}\includegraphics[scale=.7]{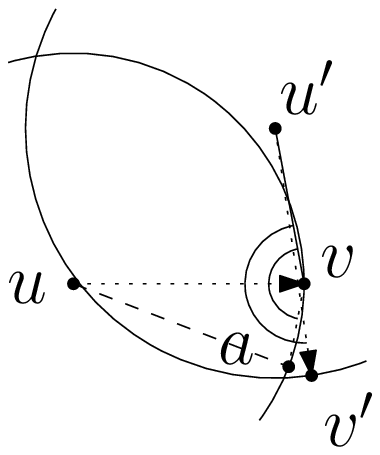}}
    \caption{Crossings of SNN edges}
  \end{figure}
  \qed
\end{proof}

The following lemma will help to determine our upper bound.

\begin{lemma}
\label{lem:length}
Let $u,v,u',v'$ be four vertices forming a $Tie(u; v, u', v')$ and $w$ be a vertex such that $d(u, w) \leq 1$, $\angle(wuv) \leq \varphi$, and $\{u', u\}$ crosses $\{w, v\}$.  Then, $d(w, u')^2 \leq 3-2\sqrt{2}\cos(\varphi - \pi/4)$.
\end{lemma}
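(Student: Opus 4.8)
\noindent\emph{Proof proposal.}
The plan is to bound $d(w,u')$ via the law of cosines in the triangle $\triangle(uwu')$, estimating the side $d(u,u')$ with Lemma~\ref{lem:pico3} and the angle $\angle(wuu')$ using the crossing hypothesis together with the angular constraints of a $Tie$ (Lemma~\ref{lem:pico}).

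First I would fix the point $q$ at which the segments $\{u,u'\}$ and $\{w,v\}$ cross; it exists by hypothesis and lies in the relative interiors of both. Since $q$ is an interior point of $wv$, the ray from $u$ through $u'$ (which passes through $q$) lies inside the wedge $\angle(wuv)$, so
\[
  \angle(wuu')+\angle(u'uv)=\angle(wuv)\le\varphi .
\]
In particular $\angle(wuu')\le\varphi$, and any lower bound on $\angle(u'uv)$ improves this, which matters because $d(w,u')$ is largest when $\angle(wuu')$ is largest.

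Next I would bound the sides. The four points form a $Tie$ whose two segments $uv$, $u'v'$ have length at most $1$ (they are edges of $U(P,1)$ in the setting in which the lemma is applied), so Lemma~\ref{lem:pico3} yields $d(u,u')\le\sqrt2$; also $d(u,w)\le1$ by assumption. Dropping the perpendicular from $w$ onto the line $uu'$ and writing $f=d(u,w)\cos\angle(wuu')$ for the signed distance from $u$ to its foot, Pythagoras gives
\[
  d(w,u')^2=d(u,u')^2-2\,d(u,u')\,f+d(u,w)^2 .
\]
Viewed as a function of $d(u,u')\in(0,\sqrt2]$ this is a parabola with vertex at $d(u,u')=f$, so when $f\le 1/\sqrt2$ its maximum is attained at $d(u,u')=\sqrt2$, giving
\[
  d(w,u')^2\le 2+d(u,w)^2-2\sqrt2\,d(u,w)\cos\angle(wuu') ;
\]
the remaining case $f>1/\sqrt2$ forces $d(w,u')^2<d(u,w)^2\le1$, which already satisfies the claimed bound for the relevant range of $\varphi$.

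The principal step is then to show that the right-hand side above is at most $3-2\sqrt2\cos(\varphi-\pi/4)$, that is, that $d(u,w)\cos\angle(wuu')\ge\cos(\varphi-\pi/4)-\frac{1-d(u,w)^2}{2\sqrt2}$. This is where the $Tie$ is used: Lemma~\ref{lem:pico} confines the angles at $v$ to $[\pi/3,2\pi/3)$, and combining this with the two disc conditions of Definition~\ref{def:pico} (that $u',v'$ lie outside $D(u;d(u,v))$ and $u$ lies outside $D(u';d(u',v'))$) restricts how far the bearing of $u'$ from $u$ may deviate toward $w$; the estimate is tight, the extremal configuration having $d(u,w)=1$, $d(u,u')=\sqrt2$ and $\angle(wuu')=\varphi-\pi/4$, which explains the constants $\sqrt2$ and $\pi/4$ (the same right-isosceles-triangle constants that already appear in the proof of Lemma~\ref{lem:pico3}). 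Carrying out this angular estimate — extracting information about the angle at $u$ from a $Tie$, which only directly controls angles at $v$ — is the main obstacle; the rest is routine trigonometry together with the monotonicity of $\cos$ on $[0,\pi]$.
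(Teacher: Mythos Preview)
Your overall plan---apply the law of cosines in $\triangle(wuu')$ after bounding $d(u,u')$ and $\angle(wuu')$---is exactly the paper's strategy, and the observation that $\angle(wuu')=\angle(wuv)-\angle(u'uv)\le\varphi-\angle(u'uv)$ is correct. The gap is in how you propose to bound the two quantities $d(u,u')$ and $\angle(u'uv)$: you decouple them, taking $d(u,u')\le\sqrt 2$ from Lemma~\ref{lem:pico3} and then hoping to prove $\angle(u'uv)\ge\pi/4$ separately. That second inequality is \emph{not} true in general for a $Tie$; the lower bound on $\angle(u'uv)$ that one can actually extract degrades to $\pi/6$ in the worst case, and if you combine $d(u,u')\le\sqrt 2$ with $\angle(u'uv)\ge\pi/6$ you only get $d(w,u')^2\le 3-2\sqrt 2\cos(\varphi-\pi/6)$, which is strictly weaker than the lemma. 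The sentence in which you acknowledge this as ``the main obstacle'' is accurate: it is not a loose end that routine trigonometry will close, because your two bounds, taken independently at their worst cases, are simply not strong enough.

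What the paper does instead is couple the two bounds through a single parameter. After a harmless normalization (slide $u'$ along $u'v'$ so that $d(u,v')=d(u,v)$), it sets $\alpha=\angle(vuv')\in(0,\pi/3]$ and derives, from the isosceles triangle $\triangle(uvv')$ and the law of cosines in $\triangle(uv'u')$, the correlated estimates
\[
d(u,u')^2\le 2-2\sin(\alpha/2),\qquad \angle(u'uv)\ge\frac{\pi-\alpha}{4}.
\]
Substituting both into the law of cosines for $d(w,u')$ (with $d(u,w)\le1$) gives a one--parameter expression $3-2\sin(\alpha/2)-4\cos\!\big(\tfrac{\pi-\alpha}{4}\big)\cos\!\big(\varphi-\tfrac{\pi-\alpha}{4}\big)$, and one checks that on $\alpha\in[0,\pi/3]$, $\varphi\in[\pi/3,\pi]$ this is maximized at $\alpha=0$, yielding $3-2\sqrt 2\cos(\varphi-\pi/4)$. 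The point is that when $\angle(u'uv)$ drops below $\pi/4$ (large $\alpha$), the accompanying shrinkage of $d(u,u')$ more than compensates; your decoupled approach throws this compensation away. To repair your argument you would need to carry exactly this parametrized analysis, not just invoke Lemma~\ref{lem:pico3}.
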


\begin{proof}
Observe that $\{u',v'\}$ crosses at least two points of $C(u; d(u,v))$.  Thus, we can assume without loss of generality that $\{u',v'\}$ crosses $C(u; d(u,v))$ in $v$ and $d(u, v) = d(u, v')$ as depicted in Figure~\ref{fig:tielength}.  Let $\alpha = \angle(vuv')$ and $\beta = \angle(uv'v) = \angle(v'vu) = \frac{\pi - \alpha}{2}$.  Observe that $0 < \alpha \leq \pi/3$ since by Lemma~\ref{lem:pico}, $\angle(uvv') \geq \pi/3$.  By the law of cosines in $\triangle(uv'u')$, $d(u, u')^2 = d(u, v')^2 + d(u',v')^2 - 2d(u, v')d(u',v')\cos(\beta) \leq 2 -2\cos(\beta)=2-2\sin(\alpha/2)$ and $d(u, u') \leq 2\sin(\frac{\beta}{2})=2\cos(\frac{\pi - \alpha}{4})$.  

Let $\gamma = \angle(wuu') = \varphi - \angle(u'uv)$.  Since $\angle(v'vu) = \beta$, $\angle(uvu') = \pi - \beta$.  Therefore, if $d(u, v) \leq d(u', v)$, then $\angle(u'uv) \geq \frac{\pi - (\pi - \beta)}{2} = \frac{\pi - \alpha}{4}$. Otherwise, $\angle(vu'u) \geq \frac{\pi - (\pi - \beta)}{2} = \frac{\beta}{2}$.  From $\triangle(uv'u')$, $\angle(u'uv) \geq \pi - \beta - \frac{\beta}{2} - \alpha = \frac{\pi-\alpha}{4}$.

From the law of cosines, $d(w, u')^2 = d(u, w)^2 + d(u, u')^2 - 2d(u,u')d(u,w)\cos(\gamma) \leq 3 -2\sin(\frac{\alpha}{2}) - 4\cos(\frac{\pi - \alpha}{4})\cos(\varphi - \frac{\pi - \alpha}{4})$.  Observe that when the angles satisfy $0\leq \alpha \leq \pi/3$ and $\pi/3 \leq \varphi \leq \pi$, then the three values $\sin(\frac{\alpha}{2}), \cos(\frac{\pi - \alpha}{4})$ and $\cos(\varphi - \frac{\pi - \alpha}{4})$ attain positive values.  Therefore, for any $\varphi \in [\pi/3, \pi]$ the maximum value is reached when $\alpha = 0$ and $d(w, u')^2 \leq 3 - 2\sqrt{2}\cos(\varphi - \frac{\pi}{4})$.

\begin{figure}[!ht]
  \centering
  \includegraphics[scale=.6]{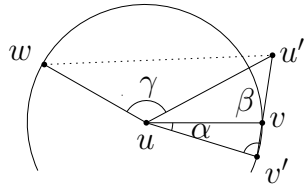}
  \caption{ If $\angle(wuv) \leq \varphi$, then $d(w, u')^2 < 3 -
    2\sqrt{2}\cos(\varphi - \frac{\pi}{4})$}
  \label{fig:tielength}
\end{figure}
\qed
\end{proof}

Now we are ready to prove the main theorem.

\begin{theorem}
\label{thm:main}
Let $P$ be a set of $n$ points in the plane in general position. If  $U(P,1)$ contains a spanning subgraph of minimum degree 2, then $U(P, 2)$ contains a geometric
 planar spanning subgraph of minimum degree 2.  Further, such a subgraph can be constructed in time $O(n\log n)$.
\end{theorem}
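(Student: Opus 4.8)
The plan is to start from the natural degree-two supergraph $G_0:=T\cup E'$ of the nearest-neighbour forest and then destroy its crossings by purely \emph{local} rerouting, using the Tie/Bow machinery to keep the replacement edges short.

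First I would check that $G_0$ already has all the required properties except planarity: every non-leaf of $T$ has degree $\ge 2$, and every leaf $u$ picks up its SNN edge $(u,v)$; since $U(P,1)$ has a spanning subgraph of minimum degree two, $u$ has at least two neighbours within distance $1$, so $d(u,v)\le 1$. Hence $G_0$ spans $P$, has minimum degree $\ge 2$, and every edge has length $\le 1\le 2$. Next I would classify its crossings. The forest $T$, being the nearest-neighbour graph (hence contained in the Delaunay triangulation), is crossing-free, so every crossing involves at least one SNN edge; by Lemma~\ref{lem:crossmst} an SNN edge crossing a $T$-edge yields a $Tie$ with an empty quadrangle in which the leaf is $T$-joined to one tip of the crossed edge, and it crosses at most one $T$-edge; by Lemmas~\ref{lem:crossvirtual} and~\ref{lem:crossvirtual2} two crossing SNN edges yield either a $Bow$ with empty quadrangle or a $Tie$ whose quadrangle is empty save possibly for one $T$-neighbour of the tip, and in every case a $T$-edge links the tail of one SNN edge to the head of the other. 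Combined with the $\pi/3$ lower bound on the angle between consecutive $T$-neighbours of a vertex, this shows each SNN edge lies in only $O(1)$ crossings.

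The resolution step: for a leaf $u$ whose SNN edge is crossed, delete that edge and instead join $u$ to the tip of the associated $Tie$ (or to the appropriate corner of the $Bow$); this preserves $\deg(u)\ge 2$. By Lemma~\ref{lem:pico3} the replacement has length $\le\sqrt 2$ whenever the relevant quadrangle is empty. The one troublesome configuration is a $Tie(u;v,u',v')$ whose quadrangle contains the $T$-neighbour $w$ of the tip: then the new edge $\{u,u'\}$ can itself cross $w$'s edge, and one more reroute is needed; because $w$ lies inside that quadrangle the angle $\angle(wuv)$ stays bounded away from $\pi$, so Lemma~\ref{lem:length} keeps $3-2\sqrt 2\cos(\varphi-\pi/4)$ below $4$ and the cascaded edge still has length $\le 2$ — this is precisely where the clean bound $2$ (rather than $\sqrt 2$) is used. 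The cascade terminates because each step strictly shrinks the enclosing empty region.

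For planarity I would argue that every rerouted edge lies inside the empty convex quadrangle of its $Tie$ or $Bow$, so by those emptiness statements it meets no surviving $T$-edge, and by Lemma~\ref{lem:pico2} (a $Tie$ cannot be overlapped by another $Tie$ through $v$, $x$, or $y$) it meets no other SNN or rerouted edge; reroute edges sharing a tip leave it into angularly disjoint sectors and so cannot cross one another. Iterating over the $O(1)$ crossings per SNN edge gives a plane spanning subgraph of $U(P,2)$ of minimum degree two. For the algorithm, compute the Voronoi/Delaunay diagram in $O(n\log n)$ to read off nearest and second-nearest neighbours, hence $T$ and $E'$; since there are only $O(n)$ crossings and each is localised near a leaf, all reroutes are done in $O(n\log n)$ total. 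The \textbf{hard part} is the planarity argument: showing that the possibly cascading reroutes never collide with each other or with a retained edge when many crossings share vertices or edges, which requires careful global bookkeeping of the Tie/Bow quadrangles and of Lemma~\ref{lem:pico2} and is what ultimately forces the length bound up to $2$.
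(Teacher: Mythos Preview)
Your overall architecture matches the paper's: build $G_0=T\cup E'$, observe that every crossing is a $Tie$ or a $Bow$ via Lemmas~\ref{lem:crossmst}--\ref{lem:crossvirtual2}, and kill crossings by local rerouting. Where you diverge from the paper, and where your argument has a real gap, is in the analysis of \emph{interacting} crossings and in the identification of the case that forces the bound up to~$2$.

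You locate the bottleneck in a single $Tie(u;v,u',v')$ whose quadrangle contains the $T$-neighbour $w$ of the tip, and then propose a cascading reroute with a vague termination claim (``each step strictly shrinks the enclosing empty region''). That is not where the difficulty lies, and the cascading picture is not correct. In the paper's Case~1 (a lone $Tie$), the presence of $w$ in $\triangle(uvu')$ is harmless: one simply adds $\{w,u'\}$ instead of $\{u,u'\}$, and the length is still $\le\sqrt2$ by Lemma~\ref{lem:pico3}; no cascade occurs. The genuine length-$2$ cases are (a) a $Bow(u,v,u',v')$, where the replacement edge $\{u,u'\}$ between the two leaves can have length up to~$2$, and (b) the situation where one edge $\{u,v\}$ sits in \emph{two} $Tie$s simultaneously --- the paper proves a structural Claim that $\{u,v\}$ lies in at most one further $Tie$, of shape either $Tie(v;u,u'',v'')$ or $Tie(u'';v'',u,v)$, and that $\{u',v\}\in E$ is crossing-free. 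In those double-$Tie$ cases the reroute connects the two leaves $u'$ and $u''$ (possibly through extremal interior points $p,q$), and it is here that Lemma~\ref{lem:length} with $\varphi\le 2\pi/3$ gives $d(u',u'')\le 2$.

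So the missing idea in your proposal is exactly this structural Claim bounding how many $Tie$s can share an edge and pinning down their mutual geometry; once you have it, the resolution is a finite three-case analysis rather than an open-ended cascade, and both the planarity and the length bound fall out directly. Your Lemma~\ref{lem:pico2} invocation is in the right spirit but is used in the paper to rule out bad overlap patterns inside that case analysis, not to control an iterative process.
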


\begin{proof}
Consider the Nearest Neighbor Graph $T=(P,E)$ of $U(P,1)$. It is known that $T$ is a subgraph of any minimum spanning tree of $U(P,1)$. Let $E'$ be the set of SNN edges from leaves of $T$.  Clearly every edge in $E'$ has length at most $1$ since $U(P, 1)$ has minimum degree two. Let $G = (P, E \cup E')$. It follows that $G$ spans $P$, has minimum degree 2, however it may not be planar. We show how to modify $G$ to a planar graph. 
	
\begin{claim}
Let $Tie(u; v, u', v')$ be a $Tie$ of $G$ where $u'$ is a leaf of $T$.\\	
(i) $\{u, v\}$ may cross at most one other edge $\{u'', v''\}$
  of $G$ such that they form either a $Tie(v; u, u'', v'')$ or a
  $Tie(u''; v'', u, v)$. \\
(ii) $\{u',v\} \in E$ does not cross any edge of $G$.
\end{claim}

\begin{proof}
(i) From Lemma \mbox{\ref{lem:crossmst}} and Lemma \mbox{\ref{lem:crossvirtual}}, $\{u',v\} \in E$.  Therefore, $v$ is not a leaf in $T$. Hence, if $u$ is a leaf of $T$, then from Lemma \mbox{\ref{lem:crossvirtual}}, $\{u, v\}$ may be only the crossing line of a $Tie(u''; v'', u, v)$ as depicted in Figure \mbox{\ref{fig:twotiesnn2}}. On the other hand, $v$ may be the tip of another $Tie(v;u,u'',v'')$ as depicted in Figure \mbox{\ref{fig:twotiemst}}. However, in that case $u$ is not a leaf of $T$.
 
(ii) Assume by contradiction that $\{u',v\}$ crosses a SNN edge $(x, y)\in E'$ where $x$ is a leaf of $T$.  Therefore, from Lemma \mbox{\ref{lem:crossmst}} they form a $Tie(u'; v, x, y)$ where $\{x,v\} \in E$ since $u'$ is a leaf.  Observe that $(x,y)$ also crosses $(u',v')$ otherwise $(u',v')$ would not be the SNN edge. Therefore, from Lemma \mbox{\ref{lem:crossvirtual}} either $\{v, x\} \in E$ or $\{u', y\} \in E$. This is a contradiction since $u'$ and $x$ are leaves of $T$.

\begin{figure}[!ht]
  \centering \subfloat[$Tie(u; v, u', v')$ and $Tie(u''; v'', u,
  v)$.]{\label{fig:twotiesnn2}\includegraphics[scale=.7]{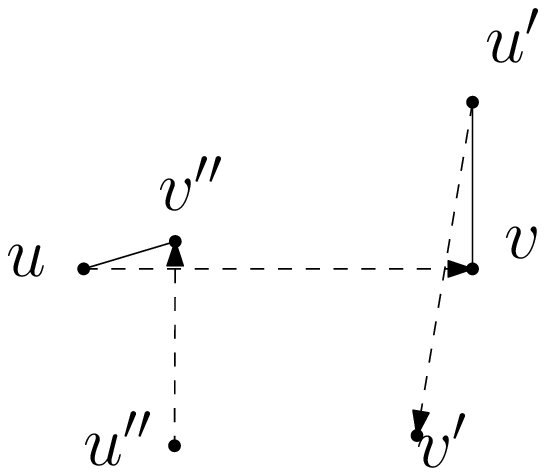}}
  \hspace{2cm}%
  \subfloat[$Tie(u; v, u', v')$ and $Tie(v; u,
  u'',v'')$]{\label{fig:twotiemst}\includegraphics[scale=.7]{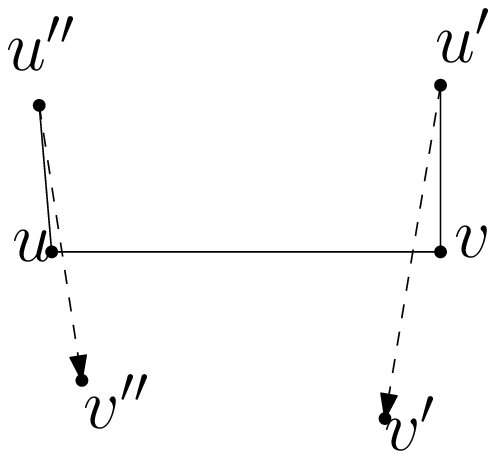}}
  \caption{$\{u,v\}$ is in at most two $Tie$s (Solid lines are
    edges of $T$ and dashed arrow lines are SNN edges.) }
\end{figure}

\end{proof}

The proof is constructive. In every step we remove at least one crossing of $G$ by replacing edges of $E'$. First, we remove all $Tie$s.

Let $Tie(u; v, u', v')$ be a $Tie$ of $G$ where $u'$ is a leaf of $T$.  Observe that from Lemma \mbox{\ref{lem:pico2}}, there is no leaf $r$ of $T$ such that either $(r,v)$ crosses $\{u',v'\}$ or $(r,v')$ crosses $\{u,v\}$.  According to Claim, three cases can occur:

\begin{enumerate}
\item{$\{u,v\}$ does not form another $Tie$.} 
From Lemma~\ref{lem:crossmst} and Lemma~\ref{lem:crossvirtual}, $\triangle(uvu')$ is either empty or it has exactly one vertex $w$ such that $\{w, u\} \in E$.  If $\triangle(uvu')$ is empty, let $E' = E' \cup \{\{u, u'\}\} \setminus \{\{u',v'\}\}$. Otherwise, let $E' = E' \cup \{\{w, u'\}\} \setminus \{\{u',v')\}$; see Figure~\ref{fig:twocross1}.  From Lemma~\ref{lem:pico3}, $d(u, u') \leq \sqrt{2}$. Therefore the length of the new edge is bounded by $\sqrt{2}$. Since $\{u,v\}$ and $\{v, u'\}$ do not cross, the new edge does not cross any edge of $G$.

\begin{figure}[!ht]
  \centering
  \includegraphics[scale=.7]{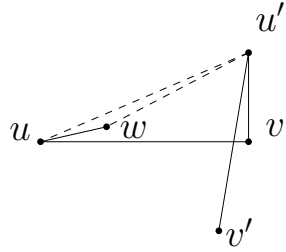}
  \caption{$\{u, v\}$ is in one $Tie$ (Dotted lines are removed
    edges and dashed lines are possible new edges.)}
  \label{fig:twocross1}
\end{figure}

\item{$\{u,v\}$ forms a $Tie(v; u, u'', v'')$ where $u''$ is a leaf of $T$.} 
Observe that in this case $u$ and $v$ are not leaves of $T$.  Therefore, from Lemma~\ref{lem:crossmst} the quadrangles $uu'vv'$ and $vu''uv''$ are empty. We consider two cases. In the first case $\{u, u'\}$ does not cross $\{u'',v\}$. Let, $E' = E' \cup \{ \{u, u' \},\{u'',v\}\} \setminus \{\{u',v'\},\{u'',v''\}\}$ as depicted in Figure~\ref{fig:twocross2}.  From Lemma~\ref{lem:pico3}, the new edges are bounded by $\sqrt{2}$.  In the second case $\{u, u'\}$ crosses $\{u'',v\}$; see Figure~\ref{fig:twocross3}.  Consider the quadrangle $uvu'u''$.  If it is empty, let $E' = E' \cup \{ \{u', u''\} \} \setminus \{\{u',v'\},\{u'',v''\}\}$.  Otherwise, let $p$ and $q$ be the vertices in $uvu'u''$ such that $\angle(uu''p)$ and $\angle(vu'q)$ are minimum.  Let $E' = E' \cup \{\{u', q\},\{u'',q\}\} \setminus \{\{u',v'\},\{u'',v''\}\}$.  From Lemma~\ref{lem:length}, $d(u', u'') \leq 2$ since $\angle(u''uv) \leq 2\pi/3$.  Observe that $p$ does not have a neighbor in the same half-space determined by $\{u'', p\}$ as $u$ because $\angle(uu''p)$ is minimum.  Similarly, $q$ does not have a neighbor in the same half-space determined by $\{u', q\}$ as $v$ because $\angle(vu'q)$ is minimum.  Since, $\{v, u'\}$ and $\{u,u''\}$ do not cross any other edge and $\{u,v\}$ only forms $Tie(u; v, u', v')$ and $Tie(v; u, u'', v'')$, the new edges do not cross any edge of $G$.

\begin{figure}[!ht]
  \centering \subfloat[$\{u, v\}$ is in one
  $Tie$.]{\label{fig:twocross2}\includegraphics[scale=.7]{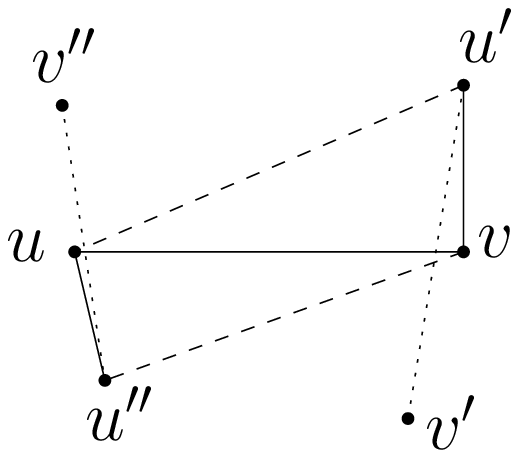}}
  \hspace{2cm}%
  \subfloat[$\{u, v\}$ is in two
  $Tie$s]{\label{fig:twocross3}\includegraphics[scale=.7]{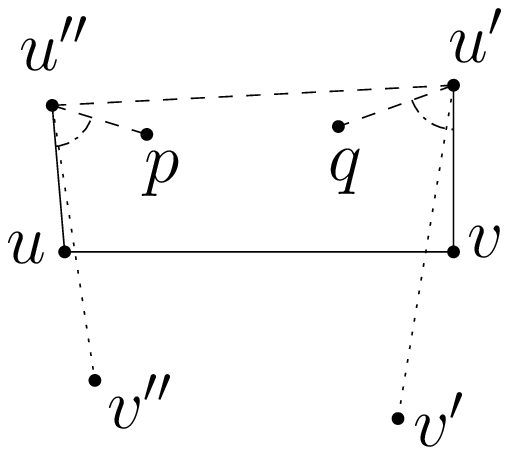}}
  \caption{$\{u, v\}$ crosses at least one edge of $G$ (Dotted lines
    are removed edges and dashed lines are possible new edges.)}
\end{figure}
		 
\item{$\{u,v\}$ forms a $Tie(u''; v'', u,v)$.} 
$\{u,v\}$ forms a $Tie(u''; v'', u,v)$.
Observe that in this case $u$ is a leaf of $T$.  Assume without loss of generality that $\{u'', v\}$ crosses $\{u,u'\}$.  Consider the quadrangle $u''uvu'$. If it is empty, then let $E' = E' \cup \{ \{u', u'' \}\} \setminus \{\{u',v'\}\}$.  Otherwise, let $p$ be the vertex in $u''uvu'$ such that $\angle(vu'p)$ is minimum.  Let $E' = E' \cup \{ \{u', p \}\} \setminus \{\{u',v'\}\}$.  From Lemma~\ref{lem:length}, $d(u', u'') \leq 2$ since $\angle(u''uv) \leq 2\pi/3$.  Observe that all the neighbors of $p$ are in the same half-plane determined by $\{u', p\}$.  It is not difficult to see that the new edge does not cross any edge of $G$ since the region $u''uvu'$ is close.

\end{enumerate}

After removing the $Tie$s we remove the $Bow$s.  Consider a $Bow(u, v, u', v')$ where $u$ and $u'$ are leaves of $T$.  Let $E' = E' \cup \{\{u, u''\}\} \setminus \{\{u,v\}, \{u',v'\}\}$.  Clearly, $d(u, u') \leq 2$ and $\{u, u''\}$ does not cross any edge of $G$.

\LinesNumbered
\begin{algorithm}[!ht]
  \SetKwData{Left}{left}\SetKwData{This}{this}\SetKwData{Up}{up}
  \SetKwFunction{Union}{Union}\SetKwFunction{FindCompress}{FindCompress}
  \SetKwInOut{Input}{input}\SetKwInOut{Output}{output}

  \Input{$U(P,1)$ with minimum degree 2.}  \Output{$G$: Geometric Planar spanning subgraph of $U(P,2)$ of minimum degree 2 and longest edge length bounded by $2$.}

  \SetAlgoLined

  Let $T = (P,E)$ be the Nearest Neighbor Graph of $U(P,1)$.\\
  Let $E'$ be the set of SNN directed edges from leaves of $T$. \\
  Let $G = (P, E \cup E')$.\\

  \ForEach{edge $\{u,v\}$ in $G$ that forms a $Tie(u; v, u', v')$}
  { \If{$\{u,v\}$ does not form another $Tie$} {
      \lIf{$\triangle(uvu')$ is empty} {
        Let $E' = E' \cup \{ \{u, u'\} \} \setminus \{\{u', v'\}\}$.\\
      } \Else{
        Let $w \in \triangle(uvu')$ such that $\{u, w\} \in E$. \\
        Let $E' = E' \cup \{ \{w, u'\} \}\setminus \{\{u', v'\}\}$.  }
    } \If{$\{u,v\}$ forms a $Tie(v;, u, u'', v'')$ where $u''$ is a
      leaf of $T$} { \lIf{$\{u, u'\}$ crosses $\{u'', v\}$} {
        Let $E' = E' \cup \{ \{u, u' \},\{u'',v\}\} \setminus \{\{u',v'\},\{u'',v''\}\}$. \\
      } \lElse{ \lIf{the quadrangle $(uvu'u'')$ is empty} {
          Let $E' = E' \cup \{ \{u', u'' \}\} \setminus \{\{u',v'\},\{u'',v''\}\}$. \\
        } \Else{ Let $p$ and $q$ be the points in the quadrangle
          $(uvu'u'')$ such that $\angle(uu''p)$ and
          $\angle(qu'v)$ are minimum. \\
          Let $E' = E' \cup \{ \{u', p\}, \{q, u'\}  \} \setminus \{\{u',v'\},\{u'',v''\}\}$. \\
        } } } \If{$\{u,v\}$ forms a $Tie(u''; v'', u,v)$} { \If{the
        quadrangle $(uvu'u'')$ is empty} {
        Let $E' = E' \cup \{ \{u', u'' \}\} \setminus \{\{u',v'\}\}$. \\
        \lIf{$u''$ is a leaf of $T$} { Let $E' = E' \setminus
          \{\{u'',v''\}\}$. } } \Else{
        Let $p$ be the point in the quadrangle $(uvu'u'')$ such that $\angle(uu''p)$ is minimum. \\
        Let $E' = E' \cup \{ \{u'', p\}  \} \setminus \{\{u',v'\}\}$. \\
      }

    } } \ForEach{edge $\{u,v\}$ in $G$ that forms a $Bow(u, v, u',
    v')$} { Let $E' = E' \cup \{u, u'\} \setminus \{ \{u,v\},
    \{u',v'\}\}$ }

  \caption{Geometric planar subgraph of minimum degree 2 and
    longest edge length bounded by $2$.}
  \label{alg:alg1}
\end{algorithm}

The pseudocode is presented in Algorithm~\ref{alg:alg1}.  Regarding the complexity, the Nearest Neighbor Graph of $U(P,1)$ can be constructed in $O(n\log n)$. A range tree can be also constructed in $O(n\log n)$ where each query of proximity neighbors takes $O(\log n)$.  The removal of a crossing can be done in time $O(\log n)$ and there exist at most $2n$ $Tie$s since each leaf of $T$ can form at most two $Tie$s.  Therefore, the whole construction can be done in $O(n\log n)$ since there are at most $O(n)$ crossings. This complete the proof. \qed
\end{proof}

\section{2-Edge Connected Geometric Planar Subgraphs of a UDG of Minimum Degree 2}
\label{sec:2edge}

In this section we prove that if $U(P,1)$ is connected and has minimum degree 2, then $U(P,\sqrt{5})$ always contains a 2-edge connected planar spanning subgraph. We also show that the radius $\sqrt{5}$ is best possible. Therefore in this section we assume $U(P,1)$ is connected and has minimum degree 2.

%\subsection{The Lower Bound}

The following theorem shows that the bound $\sqrt{5}$ is best possible. 

\begin{theorem}
\label{thm:thm2}
For any real $\epsilon>0$ and any integer $n\ge 8$, there exists a set $P$ of $n$ 
points in the plane so that $U(P,1)$ is connected and has minimum degree 2 but 
$U(P,\sqrt{5}-\epsilon)$ has no geometric planar $2$-edge connected spanning subgraph.
\end{theorem}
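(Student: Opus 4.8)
The plan is to prove the bound by an explicit extremal construction, in the spirit of Theorem~\ref{thm:lowerthm1} but adapted so that the whole point set is connected, as the statement now demands. The core of the argument is a fixed-size \emph{gadget} $Q$ (eight points suffice) with the properties that $U(Q,1)$ is connected and of minimum degree $2$, while $U(Q,\sqrt5-\epsilon)$ has no $2$-edge connected geometric planar spanning subgraph. I would design $Q$ around a small set of vertices each of which has a \emph{unique} pair of neighbours within distance $\sqrt5-\epsilon$: such a vertex lies just under distance $1$ from exactly two other points of $Q$ and at distance at least $\sqrt5$ from every other point. Consequently, in any spanning subgraph of $U(Q,\sqrt5-\epsilon)$ of minimum degree $2$ --- hence in any $2$-edge connected spanning subgraph --- the two edges at each such vertex are \emph{forced}. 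The rest of $Q$ is then placed so that the forced edges, drawn as straight segments at the prescribed coordinates (together, when needed, with the two edge-disjoint routes through a small rigid ``wall'' sub-cluster that $2$-edge connectivity must provide), unavoidably contain a pair of crossing segments, and so that every way of redrawing the subgraph to remove that crossing while keeping it $2$-edge connected is forced to use a chord of Euclidean length at least $\sqrt5$ --- the value $\sqrt5=\sqrt{1^2+2^2}$ being exactly the length realised by the shortest admissible ``escape'' between the relevant points and hence unavailable in $U(Q,\sqrt5-\epsilon)$.

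Establishing that $Q$ works reduces to a finite computation (connectivity and minimum degree of $U(Q,1)$) together with the nonexistence claim, which is the substantive part. For that I would show: any $2$-edge connected $H\subseteq U(Q,\sqrt5-\epsilon)$ spanning $Q$ must contain all forced edges; this forced skeleton, once augmented by the two routes through the wall that $2$-edge connectivity guarantees, cannot be realised as a straight-line planar graph at the fixed coordinates; and the finitely many alternative planar straight-line completions either leave a bridge or a separated block --- contradicting $2$-edge connectivity --- or contain an edge of length $\ge\sqrt5$. Choosing the coordinates of $Q$ in general position makes all of these distance comparisons strict, so the argument is unaffected by passing from $\sqrt5$ to $\sqrt5-\epsilon$.

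To obtain the statement for every $n\ge8$, I would attach to a boundary vertex $w$ of $Q$ a ``harmless filler'': a triangular strip $w=w_0,w_1,\dots,w_{n-8}$ with edges $\{w_i,w_{i+1}\}$ and $\{w_i,w_{i+2}\}$, laid out along a ray pointing away from $Q$ and placed so that no edge of $U(P,\sqrt5-\epsilon)$ joins a strip vertex other than $w$ to the part of $Q$ realising the obstruction. A triangular strip is $2$-edge connected, planar and of minimum degree $2$, and it meets $Q$ only at $w$, so the resulting point set $P$ has $U(P,1)$ connected with minimum degree $2$. Because the strip provides no short edge that could serve as the missing second edge-disjoint route inside the core, any $2$-edge connected planar spanning subgraph of $U(P,\sqrt5-\epsilon)$ would induce such a subgraph on $Q$, contradicting the previous step; since the strip can be grown one triangle at a time, every $n\ge8$ is realised.

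The main obstacle is the design and verification of the eight-point core: one must simultaneously force enough edges that essentially no choices remain, arrange the fixed coordinates so that the forced skeleton together with the $2$-edge connectivity requirement is genuinely non-planar as a straight-line drawing, and calibrate the distances so that the cheapest admissible reroute is exactly a $\sqrt5$ chord rather than something shorter. A secondary but necessary point is to keep all coordinates in general position and to confirm that the construction is available for every $\epsilon>0$, which exhibits $\sqrt5$ as tight.
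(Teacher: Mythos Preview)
Your overall strategy matches the paper's: exhibit a fixed-size gadget on which every $2$-edge connected geometric planar spanning subgraph must use an edge of length arbitrarily close to $\sqrt5$, then pad to arbitrary $n$ with extra vertices that cannot help. Your extension via a triangular strip attached at a cut vertex $w$ is a reasonable variant of the paper's device (the paper instead equips the gadget with a designated ``entry point'' $x$ of degree~$1$ and hangs further components off $x$), and your reduction---that a $2$-edge connected planar spanning subgraph of $P$ restricts to one on $Q$ because a bridge of $H[Q]$ would still be a bridge of $H$ when the strip meets $Q$ only at $w$---is correct. One small slip: for $n=9$ your strip consists of the single extra vertex $w_1$, which then has degree~$1$ in $U(P,1)$; you need a separate filler for that case (or a $7$-point core).

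The genuine gap is that you never construct $Q$. You correctly list the properties $Q$ must have (forced edges at vertices with only two neighbours in $U(Q,\sqrt5-\epsilon)$, an unavoidable crossing among the forced edges, and every planar reroute requiring a chord of length $\ge\sqrt5$), and you even flag this as ``the main obstacle''---but you then stop. Without explicit coordinates and the finite verification that (i) $U(Q,1)$ is connected with minimum degree~$2$, (ii) the forced skeleton is non-planar as a straight-line drawing, and (iii) every alternative planar completion either has a bridge or uses an edge of length $\ge\sqrt5$, what you have is a plan, not a proof. The paper supplies exactly this missing piece: a concrete configuration (Figure~\ref{fig:cactuslower}) containing vertices $v,w,u_1,u_2$ and an entry point $x$, arranged so that the edge $\{v,w\}$ can be placed on a cycle only by adding $\{u_1,w\}$ or $\{u_2,w\}$, with the geometry tuned so that $d(u_i,w)$ can be made as close to $\sqrt5$ as desired. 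Your job is to produce and verify such a configuration; the rest of your outline then goes through.
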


\begin{proof}
Consider the component $C$ despited in Figure~\ref{fig:cactuslower}. The vertex $x$ is called the entry point and has the following properties: $d(x)=1$, $d(v,x) \geq \sqrt{5}$ and $\{u_2,x\}$ crosses $C$.  Observe that $C$ requires at least one of the edges $\{u_1,w\},\{u_2,w\}$ be included so that the edge $\{v,w\}$ 
is in a 2-edge connected geometric planar spanning subgraph. We may assume without loss of generality that the edge $u_1w$ is added. Observe, that for any arbitrarily small $\epsilon > 0$, there exists $\delta>0$ sufficiently close to zero such that $\sqrt{5} - d(u_1,w) \leq \epsilon$. Observe that $C\setminus x$ has minimum degree two and the lower bound holds.  We can construct a family of UDGs with $n>8$ vertices and minimum degree two having the same lower bound by connecting the entry point $x$ to distinct UDG components.
  \begin{figure}[!ht]
    \begin{center}
      \includegraphics[scale=.7]{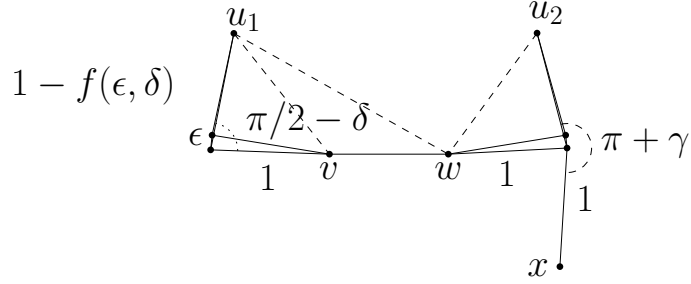}
    \end{center}
    \caption{UDG Component with minimum degree 2 that requires scaling
      factor of $\sqrt{5}$.}
    \label{fig:cactuslower}
  \end{figure}
  \qed
\end{proof}

%\subsection{The Upper Bound}

\begin{theorem}\label{thm:main2}
Let $P$ be a set of $n$ points in the plane in general position such that $U(P, 1)$ is connected and has minimum degree 2. 
Then $U(P, \sqrt{5})$ has a 2-edge connected geometric planar spanning subgraph.  Further, it can be constructed in time $O(n\log n)$.
\end{theorem}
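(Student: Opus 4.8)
The plan is to build the 2-edge connected spanning subgraph of $U(P,\sqrt5)$ in two phases, leveraging Theorem~\ref{thm:main} as a black box. First I would apply Theorem~\ref{thm:main} to obtain, in $O(n\log n)$ time, a geometric planar spanning subgraph $G_0$ of $U(P,2)$ of minimum degree $2$. Since $U(P,1)$ is assumed connected, the underlying Nearest Neighbor forest $T$ used in that construction is in fact a single tree, so $G_0$ is connected; moreover every edge of $G_0$ has length at most $\sqrt2$ (the new edges introduced in the degree-2 construction were all bounded by $\sqrt2$, and the tree/SNN edges have length at most $1$). The point of going through the degree-2 construction rather than working directly is that $G_0$ is already planar and connected with no vertex of degree $1$; the only obstruction to 2-edge connectivity is the presence of bridges, i.e. edges whose removal disconnects $G_0$.

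The second phase is to eliminate bridges. I would compute the bridge-block (2-edge-connected component) tree of $G_0$ and process it leaf-block by leaf-block, exactly in the spirit of the augmentation arguments in \cite{kranakis-10}: for each pendant 2-edge-connected block $B$ attached to the rest of $G_0$ through a single bridge, I need to add one edge from $B$ to ``the outside'' that (a) has length at most $\sqrt5$ and (b) preserves planarity. The key geometric input is that since $U(P,1)$ is connected, for any vertex $w$ there is another point within distance $1$; pushing this through the structure of $G_0$ — in particular using that the bridge incident to $B$ plus the minimum-degree-2 condition force a nearby vertex in a controlled angular sector — should yield a candidate endpoint at distance at most $\sqrt5$. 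Here is where Lemma~\ref{lem:length} does the real work: taking $\varphi$ as large as the geometry of a $Tie$ permits, $d(w,u')^2 \le 3 - 2\sqrt2\cos(\varphi - \pi/4)$, and at the worst admissible angle this evaluates to $5$, which is precisely why $\sqrt5$ (and not something smaller) is the right bound and matches Theorem~\ref{thm:thm2}. When the straight segment to the natural candidate would cross existing edges, I would reroute along the boundary of the empty region containing it — choosing the vertex realizing the minimum angle, as in the case analysis of Theorem~\ref{thm:main} — so that the replacement edge lies in a closed empty region and hence crosses nothing; each such reroute only shortens the edge, so the $\sqrt5$ bound is preserved. Iterating, each step strictly decreases the number of bridges while keeping the graph planar, connected, of minimum degree $\ge 2$, and with all edges of length $\le \sqrt5$.

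For the running time I would argue that $G_0$ has $O(n)$ edges, the bridge-block tree is computable in $O(n)$ time, there are $O(n)$ pendant blocks to fix over the whole process, and each fix requires a constant number of proximity/region queries supported by an $O(n\log n)$-preprocessable range/Delaunay-type data structure at $O(\log n)$ per query (together with an $O(\log n)$-cost planarity-preserving reroute, as in Theorem~\ref{thm:main}); this totals $O(n\log n)$.

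The main obstacle I anticipate is the second phase: unlike the degree-2 construction, where every crossing was localized to a single $Tie$ or $Bow$ with an empty quadrangle, removing a bridge requires reaching from a possibly ``deep'' pendant block to a vertex outside it, and I must simultaneously certify that such a vertex exists within distance $\sqrt5$, that the replacement does not reintroduce a bridge elsewhere or create a crossing, and that the degree-2 and connectivity invariants survive. Making the ``nearby vertex in a controlled sector'' claim precise — i.e. extracting from connectivity of $U(P,1)$ a usable angular/length bound feeding into Lemma~\ref{lem:length} with $\varphi$ at its extreme value $2\pi/3$ or the relevant maximum — is the delicate step, and it is exactly there that the constant $\sqrt5$ is forced.
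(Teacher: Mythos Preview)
Your two-phase outline --- build $G_0$ via Theorem~\ref{thm:main} and then repair bridges --- is the paper's plan as well, and you correctly identify Lemma~\ref{lem:length} as the source of the constant $\sqrt5$. But two claims in phase one are wrong. First, the Nearest Neighbor Graph used in Theorem~\ref{thm:main} need \emph{not} be a single tree when $U(P,1)$ is connected (two tight clusters joined by a slightly longer edge already give a disconnected NNG); the paper quietly replaces it by a minimum spanning tree $T$ here so that $G_0$ is connected and so that one can $2$-color it. Second, the edges of $G_0$ are not all of length at most $\sqrt2$: cases~2 and~3 and the $Bow$ removal in the proof of Theorem~\ref{thm:main} produce edges of length up to $2$. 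Neither error is fatal since $2<\sqrt5$, but the $\sqrt2$ bound is not available downstream.

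The real gap is phase two. You plan to compute the bridge-block tree, iterate over pendant blocks, and for each one search for an edge to ``the outside'', hoping that connectivity of $U(P,1)$ plus some angular control sets up a $Tie$ so that Lemma~\ref{lem:length} fires. That is the step you yourself flag as delicate, and you never actually supply the mechanism. The paper's device is both simpler and more concrete, and it bypasses the block tree entirely: properly $2$-color the \emph{internal} vertices of $T$ black/red (leaves get a third color), fix one class, and for every bridge $\{u,v\}\in E$ with $u$ black take the angularly immediate neighbor $w$ of $u$ with $\angle(wuv)<\pi$, preferring $E$-edges to $E'$-edges, and add $\{v,w\}$ (or a short rerouted pair when $\triangle(uvw)$ is nonempty). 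If $\{u,w\}\in E$ then $d(v,w)\le 2$ trivially. If $\{u,w\}\in E'$ then, by the structure of the repairs in Theorem~\ref{thm:main}, $u$ was the \emph{tip} of a $Tie$ with leaf $w$, so Lemma~\ref{lem:length} applies with $\varphi<\pi$ (not $2\pi/3$), and $d(v,w)^2\le 3-2\sqrt2\cos(\pi-\pi/4)=5$. The $2$-coloring is what ensures each bridge is processed from exactly one endpoint and that this endpoint is internal (hence has a second neighbor to play the role of $w$); no iteration over a shrinking block tree is needed, and the whole repair is a single $O(n\log n)$ sweep over bridges.
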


\begin{proof}
Let $T=(P,E)$ be a minimum spanning tree (MST) of $U(P,1)$. Properly color the internal vertices of $T$ with two colors, say black and red, and then color leaves with green. 
Recall that a proper $k$-coloring is an assignment  of one color among $k$ to vertices  in such a way that vertices of the same color are never adjacent.
Let $G=(P, E \cup E')$ be the spanning planar subgraph of $U(P,2)$ (which is a subgraph of $U(P,\sqrt{5})$) with minimum degree 2 obtained by Theorem~\ref{thm:main}. Choose a chromatic class, say black.  Consider a black vertex $u$ and its neighbor $v$ in $G$.  It is not difficult to see that if $\{u,v\} \in E'$, then $v$ is green, i.e. a leaf in $T$, and either $u$ was the tip of a $Tie(u, u', v, v')$ and $d(u,v) \leq \sqrt{2}$ or all the neighbors of $u$ in $T$ are in the same half-plane determined by $\{u,v\}$.
	 
Suppose that $\{u, v\} \in E$ is a bridge of $G$.  Consider the immediate edge $\{u, w\}$ of $\{u,v\}$ such that $\angle{wuv} < \pi$ with the preference to edges in $E$ and then edges in $E'$. We will add a new edge (for each such bridge) into $G$ and make sure these new edges do not add any crossings. The set of added edges will be $E''$ which is empty at the beginning.
 
\begin{itemize}
\item{$\{u, w\} \in E$.} 
Let $E'' = E'' \cup \{\{v,
    w\}\}$. Obviously $d(u,w) \leq 2$.

\item{$\{u, w\} \in E'$.} 
Observe that this corresponds to a  $Tie(u, u', w, w')$ as depicted in Figure~\ref{fig:2edge}.  We consider two cases: If $\triangle(uvw)$ is empty, then let $E'' =
    E'' \cup \{\{v,w\}\}$.  Otherwise, let $p$ and $q$ be the points such that $\angle(pvu)$ and $\angle(qwu)$ are minimum. Let $E'' =
    E'' \cup \{\{v,p\}, \{q,w\}\}$.  Since $u$ is the tip of a $Tie(u, u', v, v')$, from Lemma~\ref{lem:length}, $d(w, v) \leq
    \sqrt{5}$.
\end{itemize}
	
\begin{figure}[!ht]
  \centering
  \includegraphics[scale=.7]{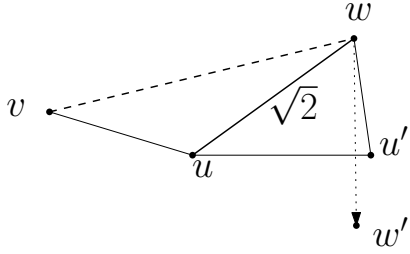}
  \caption{$\angle(wuv) < \pi$ and $\{u, v'\} \in E'$.}
  \label{fig:2edge}
\end{figure}

Observe that every vertex of $G =(P, E \cup E' \cup E'')$ is in at least one cycle. Therefore, it is two edge connected.  The pseudocode is presented in Algorithm~\ref{alg:alg2}.  Regarding to the complexity, each new edge can be added in time $O(\log{n})$. Therefore, the whole construction can be completed in time $O(n\log{n})$. \qed

\LinesNumbered
\begin{algorithm}[!ht]
  \SetKwData{Left}{left}\SetKwData{This}{this}\SetKwData{Up}{up}
  \SetKwFunction{Union}{Union}\SetKwFunction{FindCompress}{FindCompress}
  \SetKwInOut{Input}{input}\SetKwInOut{Output}{output}

  \Input{Connected UDG with minimum degree 2.}  \Output{$G$: 2-Edge
    Connected Planar Graph with longest edge length bounded by
    $\sqrt{5}$.}

  \SetAlgoLined

  Let $G = (P, E \cup E')$ be the connected planar graph of minimum
  degree 2
  obtained from Algorithm \mbox{\ref{alg:alg1}}. \\
  Color internal vertices of $T=(P,E)$ with black and red.\\

  \ForEach{Bridge $\{u, v\}\in E$ of $G$} {
    Let $u$ be a black vertex.\\
    Let $\{u, w\}$ be the immediate of $\{u, v\}$ such that
    $\angle{vwu} < \pi$
    with the preference to edges in $E$ and then edges in $E'$. \\
    \lIf{$\triangle(uvw)$ is empty} {
      Let $E' = E' \cup \{\{v, w\}\}$. \\
    } \Else{ Let $p$ and $q$ be the points in $\triangle(uvw)$ such
      that $\angle(uvp)$ and
      $\angle(qwu)$ are minimum. \\
      Let $E' = E' \cup \{ \{v, p\}, \{q,w\} \}$.  } }

  \caption{Constructing a 2-Edge Connected Planar Graph with longest
    edge length $\sqrt{5}$}
  \label{alg:alg2}
\end{algorithm}

\end{proof}

\section{2-Edge Connected Planar Subgraphs of a 2-Edge Connected UDG}
\label{sec:2edge2}

In this section we prove that if $U(P,1)$ is 2-edge connected, then $U(P,2)$ always contains a 2-edge connected geometric 
planar spanning subgraph. We also show that the radius $2$ is best possible. Therefore in this section we assume $U(P,1)$ is 2-edge connected.

%\subsection{The Lower Bound}

The following theorem shows that the bound $2$ is best possible.

\begin{theorem}
\label{thm:thmlower3}
For any real $\epsilon>0$ and any integer $k$, there exists a set $R$ of $n=3k+1$ points in the plane so that $U(P,1)$ is 2-edge connected but $U(R,2-\epsilon)$ has no planar $2$-edge connected spanning subgraph.
\end{theorem}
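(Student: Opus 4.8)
The plan is to give an explicit construction that hardens the ``crossed diamond'' of Theorem~\ref{thm:lowerthm1} so that the supporting unit disk graph becomes $2$-edge connected and the conclusion concerns $2$-edge connectivity rather than minimum degree $2$. First I would take four points in convex position $u,v,x,y$ with $d(u,v)=2-\epsilon/2$, with $x$ and $y$ placed just off the midpoint of $uv$ on the \emph{same} side of the line $uv$, at a tiny perpendicular offset $\tau>0$. A direct computation then gives $d(u,x),d(u,y),d(v,x),d(v,y)\le 1$, shows that the boundary cycle of the convex hull of $\{u,v,x,y\}$ is $u-v-x-y-u$, and shows that the two diagonals $ux$ and $vy$ cross. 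So inside this block the edges of $U(R,1)$ are exactly $ux,uy,vx,vy$, which form the $4$-cycle $u-x-v-y-u$ and hence are $2$-edge connected; moreover $d(u,v)\in(2-\epsilon,2]$, and the two edges that must not both be present are the crossing diagonals $ux,vy$. (At range exactly $2$ one may use the hull cycle $u-v-x-y-u$, which is crossing free, so the threshold is genuinely $2$.)

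Next I would pad the point set up to $n=3k+1$ while keeping $U(R,1)$ $2$-edge connected: attach to the block a simple convex polygon $D$ on $3k-2$ vertices, one of which is $v$, with all consecutive distances at most $1$, placed far from $u,x,y$ and with the two $D$-neighbours of $v$ close together and pointing \emph{away} from the block. Then $U(R,1)$ is the $4$-cycle together with the cycle $D$ glued at the single cut vertex $v$, which is $2$-edge connected, and $|R|=4+(3k-3)=3k+1$ (for $k=1$ the block alone suffices). The crucial property secured by this placement is that inside $U(R,2-\epsilon)$ the only neighbours of $u$ are $x$ and $y$, the only neighbours of $x$ (resp.\ $y$) are $u,v$ and $y$ (resp.\ $u,v$ and $x$), and $uv$ is too long; hence, writing $B=\{u,x,y\}$, the only edges of $U(R,2-\epsilon)$ leaving $B$ are $xv$ and $yv$.

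Finally I would derive the contradiction in two steps. Suppose $G$ is a $2$-edge connected planar spanning subgraph of $U(R,2-\epsilon)$. Since $G$ is $2$-edge connected, $\deg_G(u)\ge 2$, and $u$ has only the two available edges $ux,uy$, so $ux,uy\in E(G)$. Since $G$ is $2$-edge connected, the cut $\partial_G(B)$ has at least two edges; but $\partial_{U(R,2-\epsilon)}(B)=\{xv,yv\}$, so $xv,yv\in E(G)$ as well. Then $G$ contains both crossing diagonals $ux$ and $vy$, contradicting planarity of $G$. As this works for every $\epsilon>0$ and every $k$, it proves the statement, and it is exactly what makes the factor $2$ of the main theorem best possible.

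The step I expect to be the real work is the geometric bookkeeping behind isolating $B=\{u,x,y\}$: because $U(R,1)$ must be $2$-edge connected, $v$ has to carry unit-length edges both into the block and into $D$, so the two vertices of $D$ adjacent to $v$ are forced within distance $1$ of $v$ and hence within roughly distance $2$ of $x$ and $y$. One has to check these vertices, and all of $D$, can be pushed strictly beyond distance $2-\epsilon$ from $u,x,y$ while $D$ still closes into a single cycle of the prescribed length without unwanted short chords, and that $R$ can be taken in general position. The convexity computation putting $ux$ across $vy$, and the edge-cut argument that forces the four short edges into $G$, are routine.
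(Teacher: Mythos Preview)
Your argument is correct and rests on the same four-point ``crossed diamond'' gadget the paper uses (its Figure~\ref{fig:2edge1}, which is the same picture as in Theorem~\ref{thm:lowerthm1}): a pair $u,v$ at distance just below $2$ with $x,y$ near the midpoint, so that any $2$-edge connected spanning subgraph at range $2-\epsilon$ is forced to contain both crossing diagonals. Your cut argument ($\deg_G(u)\ge 2$ forces $ux,uy$; the edge-cut at $B=\{u,x,y\}$ forces $vx,vy$) makes explicit what the paper leaves to the figure. The only genuine difference is how the point count is brought up to $3k+1$: the paper strings $k$ copies of the gadget along a convex path, sharing one vertex between consecutive copies, whereas you hang a single convex cycle $D$ off $v$. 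Your padding buys a cleaner logical structure (one gadget, one cut) at the cost of exactly the geometric bookkeeping you flag---placing the two $D$-neighbours of $v$ within unit distance of $v$ yet beyond $2-\epsilon$ from $x$ and $y$, which forces them to sit at distance close to $1$ from $v$ on the far side; the paper's chaining avoids this isolation check since the end gadget of the chain already has its $u$ of degree two with no foreign vertices nearby.
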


\begin{proof}
The construction is based on the component depicted in Figure~\ref{fig:2edge1}. Observe that the component is the same as the component of the lower bound of planar graphs with minimum degree two. Clearly, it requires $\{u,v\}$ to create a 2-edge connected planar graph.  A UDG with $k$ components can be created by forming a convex path as depicted in Figure~\ref{fig:min}.  It is not difficult to see that the lower bound also holds for this UDG with $1+3k$ vertices.
  \begin{figure}[!ht]
    \begin{center}
      \subfloat[Basic component that requires scaling factor of
      $2$.]{\label{fig:2edge1}\includegraphics[scale=.7]{FIG/2-udg}}
      \hspace{0.5cm} \subfloat[Components forming a convex path with
      $1+3k$
      vertices.]{\label{fig:min}\includegraphics[scale=.5]{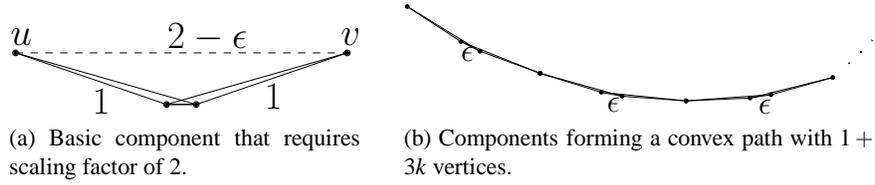}}
    \end{center}
    \caption{Two-edge connected UDG with $1+3k$ vertices that requires
      scaling factor of $2$.}
  \end{figure}
  \qed
\end{proof}

%\subsection{The Upper Bound}

We say that a vertex $v$ of a graph $G$ is $Arduous$ if $v$ has degree two, is not in a cycle, and the angle that it forms with its consecutive neighbors is greater than $5\pi/6$. Thus, we have the following Corollary to Theorem~\ref{thm:thm2}.

\begin{corollary} 
\label{cor:1} 
Let $P$ be a set of $n$ points in the plane in general position such that $U(P, 1)$ is connected and has minimum degree 2.  Let $T = (P, E)$ be an MST of $U(P,1)$. Consider a (proper) 2-coloring of vertices of $T$ with colors black and red. If $U(P,1)$ does not have either black or red $Arduous$ vertices, then $U(P, 2)$ has an underlying 2-edge connected geometric planar graph.
\end{corollary}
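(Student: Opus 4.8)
The plan is to revisit the construction from the proof of Theorem~\ref{thm:main2} and observe where the bound $\sqrt5$ (rather than $2$) was actually forced. Recall that in that proof we took the minimum spanning tree $T=(P,E)$, applied Theorem~\ref{thm:main} to obtain a planar spanning subgraph $G=(P,E\cup E')$ of $U(P,2)$ with minimum degree $2$, properly $2$-colored the internal vertices of $T$ black/red, picked one color class (say black), and for each bridge $\{u,v\}\in E$ of $G$ with $u$ black we added an edge near the ``immediate'' edge $\{u,w\}$ at $u$. When $\{u,w\}\in E$ we got a new edge of length at most $2$; the only place where length could grow to $\sqrt5$ is the case $\{u,w\}\in E'$, where $u$ is the tip of a $Tie(u,u',w,w')$ and Lemma~\ref{lem:length} gives $d(w,v)\le 3-2\sqrt2\cos(\varphi-\pi/4)$ with $\varphi=\angle(wuv)$, which equals $5$ precisely when $\varphi$ can be as large as $2\pi/3$. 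So the first step is to pin down: the only obstruction to a bound of $2$ is a black bridge-endpoint $u$ at which the immediate neighbor $w$ is reached via an $E'$-edge \emph{and} the angle $\angle(wuv)$ is large.

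The second step is to show that, under the hypothesis that there are no black $Arduous$ vertices, this bad case cannot occur (for the black color class; symmetrically use red if the red class is the clean one). A bridge endpoint $u$ that is black and whose incident edges in $G$ are $\{u,v\}$ and $\{u,w\}$ — and which therefore is not on any cycle through a short $E$-edge — has degree exactly $2$ in $G$ and is not in a cycle; if in addition $\angle(wuv)>5\pi/6$, then $u$ is by definition $Arduous$. The hypothesis forbids this, so for every black bridge endpoint either $u$ has another incident edge (degree $>2$ in $G$, contradicting the need to add an edge / allowing a shorter choice), or $u$ lies on a cycle (then it is not a concern), or the angle $\angle(wuv)\le 5\pi/6$. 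Plugging $\varphi\le 5\pi/6$ into Lemma~\ref{lem:length} gives $d(w,v)^2\le 3-2\sqrt2\cos(5\pi/6-\pi/4)=3-2\sqrt2\cos(7\pi/12)<4$, i.e.\ the added edge has length strictly less than $2$. When instead $\{u,w\}\in E$, the bound $d(v,w)\le 2$ is immediate as in Theorem~\ref{thm:main2}. Hence every added edge lies in $U(P,2)$, and the resulting graph is the desired $2$-edge connected geometric planar spanning subgraph of $U(P,2)$.

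The third step is bookkeeping to make sure nothing else in the Theorem~\ref{thm:main2} argument secretly needs $\sqrt5$: the edges of $T$ and the edges of $E'$ produced by Theorem~\ref{thm:main} already have length at most $2$, and the crossing-avoidance arguments for $E''$ are angular and do not depend on the radius. One also has to check the edge cases where the ``immediate'' edge $\{u,w\}$ is chosen with preference to $E$ over $E'$: this preference only helps, since an $E$-edge choice always yields length $\le 2$. So the only genuine content is steps one and two.

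I expect the main obstacle to be the precise matching between ``the bad case of the Theorem~\ref{thm:main2} construction'' and ``black (or red) $Arduous$ vertex.'' One must argue carefully that a black endpoint of a bridge that forces an $E'$-type augmentation with a wide angle really does have degree two in $G$ and really is absent from every cycle of $G$ — this requires tracing through how Theorem~\ref{thm:main} assigned $E'$-edges and verifying that such a $u$ cannot have picked up an extra incident edge during the $Tie$/$Bow$ removal. If instead such a $u$ had degree $\ge 3$ or lay on a cycle, one would not be adding an edge at $u$ at all, or a shorter edge would be available; the corollary's hypothesis is exactly what rules out the remaining wide-angle, degree-two, acyclic configuration. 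Making this dichotomy airtight is where the care goes; the geometric estimate itself is just Lemma~\ref{lem:length} evaluated at $\varphi=5\pi/6$.
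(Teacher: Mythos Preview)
Your proposal is correct and follows essentially the same route as the paper: rerun the Theorem~\ref{thm:main2} construction, observe that the only edges that can exceed length $2$ are those added in the $E'$-case at a black tip $u$ of a $Tie$, and invoke Lemma~\ref{lem:length} with $\varphi\le 5\pi/6$ (forced by the absence of black $Arduous$ vertices) to bound those edges below $2$. The paper's own proof is a terse contradiction version of exactly this; your write-up is in fact more explicit about the numerical evaluation $3-2\sqrt{2}\cos(7\pi/12)=2+\sqrt{3}<4$ and about why the offending $u$ really satisfies the degree-two, not-in-a-cycle clauses of the $Arduous$ definition, points the paper leaves implicit.
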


\begin{proof}
Let $G=(P, E \cup E')$ be the 2-edge connected geometric planar spanning subgraph  obtained by Theorem \mbox{\ref{fig:dmin2}}.  Assume that $T$ does not have black $Arduous$ vertices.  For the sake of contradiction assume that $G$ has an edge $\{v, w\} \in E'$ such that $d(v, w) >2$. Let $u$ be the black vertex of $T$ that added $\{v, w\}$ to $G$.  Observe that $u$ was the tip of a $Tie(u; u', w, w'\}$ where $w$ is a leaf and the angle that $u$ forms with $u'$ and $w$ is greater than $5\pi/6$. However, $T$ does not have black $Arduous$ vertices.  This contradicts the assumption.
\qed
\end{proof}

First we prove that if $U(P,1)$ is 2-vertex connected, then $U(P, 2)$ has a spanning 2-edge connected geometric 
planar subgraph. Then we prove the same from 2-edge connectivity of $U(P,1)$. 

\begin{theorem}
\label{thm:main3}
Let $P$ be a set of $n$ points in the plane in general position such that $U(P, 1)$ is 2-vertex connected.  Then $U(P, 2)$ has a spanning 
geometric planar 2-edge connected subgraph.
\end{theorem}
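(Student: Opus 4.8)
The plan is to leverage Corollary~\ref{cor:1}, so the main task is to reduce the $2$-vertex connected case to the situation where a suitable $2$-coloring of the MST has no \emph{Arduous} vertex of one chromatic class. First I would fix an MST $T=(P,E)$ of $U(P,1)$ and recall that, by Corollary~\ref{cor:1}, it suffices to exhibit a proper $2$-coloring of the internal vertices of $T$ (leaves colored green) in which either the black class or the red class contains no Arduous vertex. Since a path-like tree has an essentially unique proper $2$-coloring up to swapping colors, I would first observe that an Arduous vertex $v$ is very constrained: it has degree $2$ in $G$, lies on no cycle, and its two consecutive neighbors $u',w$ subtend an angle exceeding $5\pi/6$ at $v$; moreover, from the construction in Theorem~\ref{thm:main2}, $v$ plays the role of a tip $u$ of a $Tie(u;u',w,w')$ with $w$ a leaf of $T$. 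The key structural point I want to extract is that $2$-vertex connectivity of $U(P,1)$ forbids such configurations from being ``unavoidable'': if $v$ were Arduous in both colorings we would get a contradiction with $v$ having degree $\ge 2$ in $U(P,1)$ together with the near-straight angle condition.

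Next I would argue the coloring step concretely. Consider the two proper $2$-colorings obtained by swapping black and red on each tree-path component; call a vertex \emph{bad} if it is Arduous. I would show that a vertex cannot be bad in both colorings: being Arduous forces $v$ to be an internal vertex of $T$ whose $T$-neighbors lie (essentially) on one side, by Lemma~\ref{lem:pico} applied to the $Tie$ at $v$; but $2$-vertex connectivity means $v$ has at least two vertex-disjoint paths to the rest of the graph in $U(P,1)$, which conflicts with the angle $> 5\pi/6$ pinning down where $v$'s only relevant edges can go. Hence in one of the two colorings, say black, there is no black Arduous vertex, and Corollary~\ref{cor:1} immediately yields a $2$-edge connected geometric planar spanning subgraph of $U(P,2)$. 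I would also need to check that $U(P,1)$ being $2$-vertex connected in particular implies it is connected and has minimum degree $2$, which is immediate, so that the hypotheses of Corollary~\ref{cor:1} are met.

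The main obstacle I expect is the claim that no vertex is Arduous in both colorings, i.e. that $2$-vertex connectivity really does break the worst case. The delicate part is relating the combinatorial structure (which color class a vertex falls in, which edges of $E'$ get added by the Theorem~\ref{thm:main2} construction) to the geometric angle condition; one has to trace through exactly which bridge-repair step in Theorem~\ref{thm:main2} could produce a long edge and show that the responsible tip vertex, if Arduous, would violate $2$-vertex connectivity. I would handle this by a case analysis on how the offending edge $\{v,w\}\in E'$ arises: it comes from a $Tie(u;u',w,w')$ with $w$ a leaf, so $w$ has degree $1$ in $T$ and must be saved by an added edge; $2$-vertex connectivity then forces a second $U(P,1)$-neighbor of $w$ outside the near-degenerate sector, contradicting the angle bound. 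A secondary, more routine obstacle is bookkeeping the independence of newly added edges (no new crossings) across the whole forest of tree-paths, but that is inherited verbatim from the proof of Theorem~\ref{thm:main2} and needs only be invoked rather than redone.
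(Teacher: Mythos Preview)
Your proposal has a genuine gap at its core. Being \emph{Arduous} is a structural property of a vertex in the underlying graph (degree~$2$, not on any cycle, angle $>5\pi/6$); it does not depend on the coloring at all. A proper $2$-coloring of a tree is unique up to a global swap, so there are only two colorings, and in them each vertex simply switches its label black$\leftrightarrow$red. Your plan therefore reduces to finding a coloring in which no Arduous vertex happens to be black. But this fails as soon as the MST contains two Arduous vertices at odd tree-distance: they receive opposite colors in every proper $2$-coloring, so one of them is black in either choice. Such configurations are abundant under the hypothesis: take, for instance, points placed along a gently curved arc so that the MST is a path with every internal angle close to~$\pi$, and add a parallel row of points within unit distance to make $U(P,1)$ $2$-vertex connected without disturbing the MST. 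Every internal MST vertex is Arduous, and consecutive ones have opposite colors; no recoloring avoids a black Arduous vertex.

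Your geometric argument that $2$-vertex connectivity of $U(P,1)$ ``conflicts with the angle $>5\pi/6$'' misreads where the angle lives: the angle condition concerns the two neighbors of $v$ \emph{in $T$} (or in the intermediate graph $G$), whereas $2$-vertex connectivity supplies extra edges of $U(P,1)$ that need not lie in $T$ and need not even be incident to $v$. The paper's proof uses $2$-vertex connectivity in exactly this constructive way: for each black Arduous vertex $v$ it takes the two components of $T\setminus v$ and, since $v$ is not a cut-vertex of $U(P,1)$, finds a shortest $U(P,1)$-edge $\{u,w\}$ joining them; it adds this edge (arguing that it crosses no edge of $E$ and that any crossed $E'$-edge can be dropped), thereby placing $v$ on a cycle and destroying its Arduous status. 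Only after all black Arduous vertices have been eliminated by \emph{adding edges} does the argument invoke the machinery behind Corollary~\ref{cor:1}, followed by a further case analysis on the remaining bridges to keep all new edges of length at most~$2$. Your plan omits this entire edge-addition phase, and without it the conclusion does not follow.
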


\begin{proof}
Let $T =(P, E)$ be an MST of $U(P,1)$.  Consider a (proper) 2-coloring of internals vertices of $T$ with red and black colors, and assign green to leaves.  Choose any color class, say black. If $T$ does not have black $Arduous$ vertices, then by Corollary~\ref{cor:1}, $U(P, 2)$ has an underlying 2-edge connected planar graph.  Thus, assume that $T$ has at least one black $Arduous$ vertex. We will add edges to $E'$ in a greedy manner to obtain a graph $G=(P, E \cup E')$ that does not have black $Arduous$ vertices.

Consider a black $Arduous$ vertex $v$ of $G$. Let $\mathcal{G}_1$ and $\mathcal{G}_2$ be the connected components of $T\setminus v$ and $\{u,w\}$ be a shortest edge in $U(P, 1)$ that connects $\mathcal{G}_1$ and $\mathcal{G}_2$. Since $U(P, 1)$ is 2-vertex connected, $\{u, w\}$ always exists. Assume that $u \in \mathcal{G}_1$ and $w \in \mathcal{G}_2$.  Observe that every vertex in $D(u, d(u, w))$ is in $\mathcal{G}_1$ and every vertex in $D(w, d(u, w))$ is in $\mathcal{G}_2$, otherwise $\{u, w\}$ is not shortest.  Therefore, $D(u, d(u,w)) \cap D(w,d(u,w))$ either is empty or contains $v$.

We will show that $\{u, w\}$ does not cross an edge of $E$. For the sake of contradiction assume that $\{u, w\}$ crosses an edge $\{u', w'\} \in E$.  Let $R = D(u, d(u,w)) \cap D(w,d(u,w))$. Consider first the case when $u'$ and $w'$ are not in $R$. Therefore, either $\angle(u'uw)$ or $\angle(uwu')$ is the largest angle in $\triangle(uwu')$.  Similarly, either $\angle(wuw')$ or $\angle(w'wu)$ is the largest angle in $\triangle(uww')$. Observe that if $\angle(u'uw)$ and $\angle(wuw')$ are the largest angles, then there exists a cycle $u'w'u$ where $d(u', w')$ is the longest edge length. Therefore, $\{u',w'\}$ is not in $T$. Thus, assume that $\angle(u'uw)$ and $\angle(w'wu)$ are the largest angles in the respective triangles as depicted in Figure~\ref{fig:2ecg5}.  Hence, $d(u',w') > d(u,w)$. Therefore $d(u', u) \leq d(u,w)$ and similarly $d(w', w) \leq d(u,w)$.  This is a contradiction since there is a cycle $uww'u'u$ where $d(u', w')$ is the largest edge length. Now consider the case when at least one vertex of $u'$ or $w'$ is in $R$, say $w'$. Therefore, $v=w'$. However, $v$ is also incident to $u$ and $w$. This contradicts the assumption since $d(v) =2$.

Now we will prove that if $\{u,w\}$ crosses and edge $\{u',w'\} \in E'$, then $\{u',w'\}$ can be removed from $E'$ without increasing the number of black $Arduous$ vertices in $G$.  Assume without loss of generality that $u'$ and $w'$ are in $\mathcal{G}_1$ as depicted in Figure~\ref{fig:2ecg1}, otherwise, $v$ would not be an $Arduous$ vertex.  Therefore, $d(u,w) \leq \max(d(u', w), d(w, w'))$. Consider the previous step where $\{u', w'\}$ was added from $G'$. Let $v'$ be the black $Arduous$ vertex of $G'$ and $\mathcal{G'}_1$ and $\mathcal{G'}_2$ be the components of $G' \setminus v'$. Hence, $w$ was in either $\mathcal{G'}_{1}$ or $\mathcal{G'}_{2}$ and either $d(u',w') \leq d(u',w)$ or $d(u',w') \leq d(w',w)$. Therefore, they form a $Tie(w; u, u', w')$ where $u \in D(u'; d(u',w')) \cap D(w'; d(u', w'))$. Hence, $u=v'$. Thus, if $\{u,w\}$ crosses an edge $\{u',w'\} \in E'$, then let $E' = E' \cup \{\{u,w\}\} \setminus \{\{u', w'\}\}$.  Otherwise, let $E' = E' \cup \{\{u,w\}\}$.  Observe that any immediate neighbor $\{u, x\}$ and $\{w,y\}$ of $\{u, w\}$ where $x,y \notin D(u; d(u,w)) \cap D(w; d(u, w))$ form an angle of at least $\pi/3$.

\begin{figure}[!ht]
  \centering \subfloat[$\{u,w\}$ does not cross any edge of
  $T$.]{\label{fig:2ecg5}\includegraphics[scale=.7]{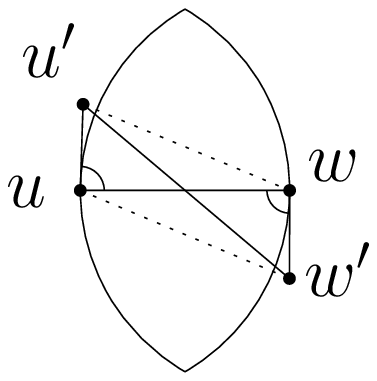}}
  \hspace{2cm} \subfloat[If $\{u,w\} \in E'$ crosses an edge
  $\{u',w'\} \in E'$, then $\{u', w'\}$ can be removed.]
  {\label{fig:2ecg1}\includegraphics[scale=.7]{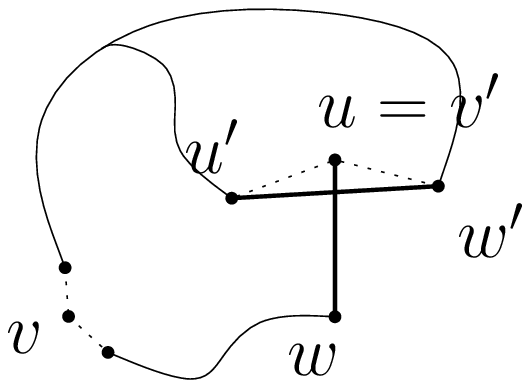}}
  \caption{Removal of black $Arduous$ vertices.}
\end{figure}

Clearly $G=(P, E \cup E')$ is planar and does not have black $Arduous$ vertices.  Let $E''$ be the set of SNN edges of $G$.

\begin{claim}
 Let $(u, v) \in E''$ be an edge that crosses an edge $\{u',v'\} \in E'$. \\
 (i) If $\{u, u'\},\{u,v'\} \notin E$, then $\{u',v'\}$ can be removed from $E'$ without increasing the number of black $Arduous$ vertices.\\
(ii) If $\{u,u'\},\{v,v'\} \in E$, then $\{u',v'\}$ can be removed from $E'$ without increasing the number of black $Arduous$ vertices. \\
(iii) If $\{u, u'\} \in E$ and $\{v, v'\} \notin E$, then they form a $Tie(v'; u', u,v)$.
\end{claim}

\begin{proof}[Claim]
Consider the step where $\{u', v'\}$ was added from $G'$. Let $w'$ be the black $Arduous$ vertex of $G'$ and let $\mathcal{G'}_1$ and $\mathcal{G'}_2$ be the components resulting from $G' \setminus w'$. Further, let $u' \in \mathcal{G'}_1$ and $v' \in \mathcal{G'}_2$.  Now we prove each case separately.

(i) Clearly $d(u, v) \leq \min(d(u, u'), d(u, v'))$ since $v$ is the second nearest neighbor of $u$. Assume without 
loss of generality that $u \in \mathcal{G'}_1$.  Therefore, $d(u', v') < d(u,v')$ and they form a $Tie(u;v,u',v')$. However, $v \in D(u'; d(u', v')) \cap D(v'; d(u',v'))$ which means that $w' = v$. Thus, we can remove $\{u',v'\}$ from $E'$ without increasing the number of black $Arduous$ vertices in $G$; see Figure \mbox{\ref{fig:2edgepr21}}.

(ii) First consider that $\{u',v\} \notin E$. Therefore, $d(u', v') < d(u',v)$ since $v$ is in the same component as $v'$. Observe that $\angle(uu'v')$ and $\angle(u'v'v)$ are the largest angles in the triangles $\triangle(uu'v')$ and $\triangle(u'v'v)$ respectively. However, since $d(u',v) \geq d(u', v')$ and $\angle(uu'v) > \angle(uu'v')$, $d(u, v') \leq d(u, v)$. This contradicts the assumption. Now consider that $\{u',v\} \in E$, then $vu'v'$ form a cycle where $\{u',v'\}$ is the longest edge otherwise $T$ is not minimum. Therefore, $v \in D(u'; d(u', v')) \cap D(v'; d(u',v'))$ and $w' = v$.  Thus, we can remove $\{u',v'\}$ from $E'$ without increasing the number of black $Arduous$ vertices in $G$.

(iii) First we will prove that $v \in \mathcal{G'}_1$.  Assume by contradiction 
that $v$ is in $\mathcal{G'}_2$.  Similarly to the previous case, $d(u', v') < d(u',v)$.  Thus, $\angle(uu'v')$ and $\angle(u'v'v)$ are the largest angles in the triangles $\triangle(uu'v')$ and $\triangle(u'v'v)$ respectively. 
However, since $d(u',v) > d(u', v')$ and $\angle(uu'v) > \angle(uu'v')$, $d(u, v') \leq d(u, v)$. Therefore, $u,v \in \mathcal{G'}_1$ and $d(u', v') \leq \min(d(v', u), d(v', v))$. Hence, they form a $Tie(v'; u', u, v)$ since $d(u, v') > d(u,v)$.

\begin{figure}[!ht]
  \centering \subfloat[If $\{u, u'\},\{u,v'\} \notin E$, then
  $\{u',v'\}$ can be removed from
  $E'$.]{\label{fig:2edgepr21}\includegraphics[scale=.6]{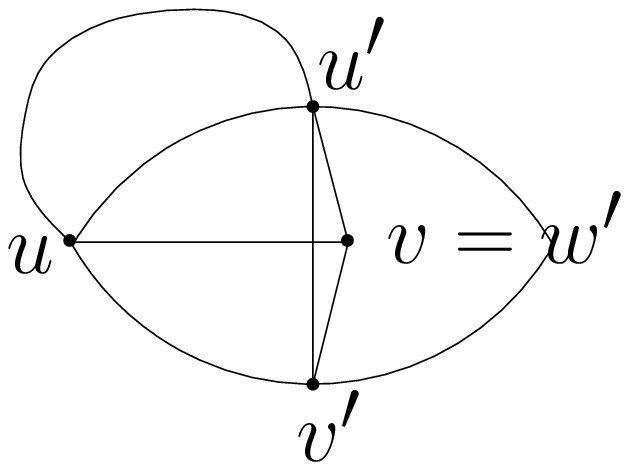}}
  \hspace{2cm}%
  \subfloat[If $\{u,u'\},\{v,v'\} \in E$, then $\{u',v'\}$ can be
  removed from
  $E'$.]{\label{fig:2edgepr22}\includegraphics[scale=.6]{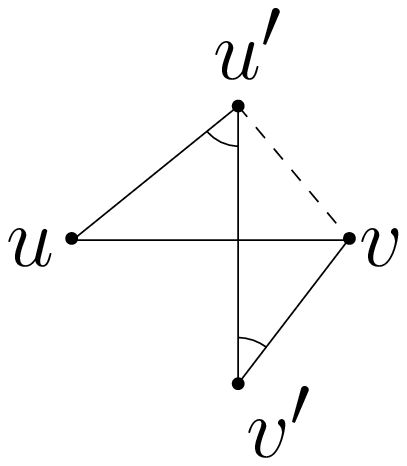}}
  \caption{Removal of black Arduous vertices. }
\end{figure}

\end{proof}

Observe that the crossings between edges in $E''$ and edges in $E \cup E'$ are equivalent to crossings between edges in $E''$ and $E$. 
That is, they form $Tie$s where leaves are endpoints of crossing lines. Thus, we can obtain a geometric planar graph of $G = (P, E \cup E' \cup E'')$ with minimum degree two from Theorem~\ref{thm:main}.  It remains to add each bridge of $G$ into at least one cycle.  Let $v$ be a black vertex of $G$ incident to a bridge $\{u, v\} \in E$ and $\{w,v\}$ be an edge such that $\angle(uvw) < \pi$ with the preference to edges in $E$, then in $E'$ and then in $E''$. We have three cases:
	
\begin{itemize}
\item{$\{w,v\} \in E$.} 
Let $E''= E'' \cup \{\{u,w\}\}$. Clearly, $d(u, w) \leq 2$.

\item{$\{w,v\} \in E'$.} 
We consider two cases. First assume that $w$ is red.  
Let $E''= E'' \cup \{\{u,w\}\}$. $d(u, w) \leq 2$. Now assume that $w$ is black. 
Clearly $d_G(v) \geq 3$ and $d_G(w) \geq 3$. Observe that since $\{w, v\} \in E'$ and $v$ is an internal 
black vertex of $T$, there exits a neighbor $w'$ of $v$ such that $\angle(uvw') < \pi$ and $\{u,w'\}$ crosses $\{v,w\}$.  
Therefore, $\angle(wvu) \leq 2\pi/3$.  Let $u'$ be the first neighbor of $w$ such that $u'wvu$ form a convex path; 
see Figure~\ref{fig:2edproof3}.  If either $u'$ does not exist or $\{u', w\} \in E'$ or $\{u', w\} \in E''$, then let $E'' = E'' \cup \{\{w, u\}\}$.  Otherwise, $\{u', w\} \in E$. Similarly, since $\{w, v\} \in E'$ and $w$ is an internal black vertex of $T$, 
there exits a neighbor $v'$ of $w$ such that $\angle(u'wv') < \pi$ and $\{u',v'\}$ crosses $\{w,v\}$.  
Therefore, $\angle(u'wv) \leq 2\pi/3$.  If the quadrangle $uvwu'$ is empty, then let $E'' = E'' \cup \{\{u,u'\}\}$.  Otherwise, let $p$ and $q$ be the points such that $\angle(pu'w)$ and $\angle(quv)$ are minimum. Let $E'' = E'' \cup \{\{u',p\}, \{q,u\}\}$.  It is not difficult to see that $d(u, u') \leq 2$. To see this, consider the right triangles $auv$ and $u'bw$ where $a$ and $b$ are the points in $\{u', u\}$ such that $\angle(vau) = \pi/2$ and $\angle(u'bw) = \pi/2$.  From the Law of sines $d(a, u) \leq 1/2$, $d(u', b) = 1/2$ and $d(p, q) =1$ since $\angle(avu) \leq \pi/6$ and $\angle(u'wb) \leq \pi/6$.

\begin{figure}[!ht]
  \centering
  \includegraphics[scale=.7]{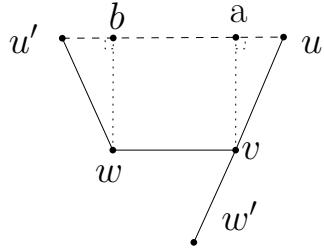}
  \caption{$\{w,v\} \in E'$ and $w$ is black.}
  \label{fig:2edproof3}
\end{figure}

\item{$\{w,v\} \in E''$.} 
We consider two cases: If $\triangle(uvw)$ is empty, then let $E'' = E'' \cup \{\{u,w\}\}$.  Otherwise, let $p$ and $q$ be the points such that $\angle(puv)$ and $\angle(qwv)$ are minimum. Let $E'' = E'' \cup \{\{u,p\}, \{q,w\}\}$.  Since $v$ is the tip of a $Tie(v, v', w, w')$ and $\angle(v'vu) \leq 5\pi/6$, from Lemma~\ref{lem:length}, $d(u, w) \leq 2$.
\end{itemize}

The pseudocode is presented in Algorithm~\ref{alg:alg3}.  Regarding the time complexity, the dominating step is the removal of $Arduous$ vertices and can be implemented in time $O(n^2)$.  That is, given an $Arduous$ vertex, determine the components $\mathcal{G}_1,\mathcal{G}_2$ of $G\setminus v$ in $O(n)$ time and look for the shortest edge length $\{u, w\}$ of $U(P, 1)$ not in $G$ such that $u \in\mathcal{G}_1$ and $w \in \mathcal{G}_2$ in $O(n)$ time.  Therefore, the construction can be done in $O(n^2)$ time. \qed

\LinesNumbered
\begin{algorithm}[!ht]
  \SetKwData{Left}{left}\SetKwData{This}{this}\SetKwData{Up}{up}
  \SetKwFunction{Union}{Union}\SetKwFunction{FindCompress}{FindCompress}
  \SetKwInOut{Input}{input}\SetKwInOut{Output}{output}

  \Input{2-vertex connected $U(P,1)$.}  \Output{$G$: Geometric planar 2-edge connected
    planar subgraph of $U(P,2)$with longest edge length bounded by $2$.}

  \SetAlgoLined

  Let $T=(P, E)$ be a MST of $U(P,1)$, $E' = \emptyset$ and  $G = (P, E \cup E')$. \\
  Color the internal vertices of $T$ with black and red colors. \\
  Let $A$ be the set of black $Arduous$ vertices of $T$. \\
  Let $G = (P, E \cup E')$. \\
  \While{$A$ is empty} {
    Let $v$ be a vertex of $A$ and $\mathcal{G}_1,\mathcal{G}_2$ be the components of $G \setminus v$. \\
    Let $\{u, w\}$ be the shortest edge such that $u \in \mathcal{G}_1$ and $w \in \mathcal{G}_2$ \\
    \lIf{$\{u, w\}$ crosses an edge $\{u', w'\} \in E'$} { Let $E' =
      E' \cup \{\{u, w\}\} \setminus \{\{u', w'\}\}$.  }  \lElse { Let
      $E' = E' \cup \{\{u, w\}\}$.  } Remove the vertices of $A$ that
    are in cycles or have degree at least three in $G$.  } Let $E''$
  be the SNN edges of $G$ and $G = (P, E \cup E' \cup E'')$ be the
  connected geometric planar graph of minimum degree 2
  obtained from Algorithm \mbox{\ref{alg:alg1}}. \\

  \ForEach{Black vertex $u \in T$} {
    Let $v$ be a black vertex and $\{v,u\}$ be a bridge of $G$.\\
    Let $\{v,w\}$ be the consecutive edge such that $\angle(wvu) <
    \pi$
    and given the following priority $E$, $E'$, $E''$.\\
    \lIf{$\{v,w\} \in E$} {
      Let $E' =  E'' \cup \{\{u, w\}\}$. \\
    } \lIf{$\{v,w\} \in E'$} { \lIf{$w$ is red} {
        Let $E' =  E'' \cup \{\{u, w\}\}$. \\
      } \If{$w$ is black} {
        Let $u'$ be the first neighbor of $w$ such that $u'wvu$ form a convex path.\\
        \lIf{$u'$ does not exist or $\{w, u'\} \in E'$ or $\{w, u'\}
          \in E''$} {
          Let $E' =  E'' \cup \{\{u, w\}\}$. \\
        } \Else{ \lIf{The quadrangle $u'wvu$ is empty}{
            Let $E' =  E'' \cup \{\{u, u'\}\}$. \\
          } \Else{ Let $p$ and $q$ be the points in $u'wvu$ such that
            $\angle(pu'w)$ and
            $\angle(quv)$ are minimum;\\
            Let $E' =  E'' \cup \{\{u', p\}, \{q, u\}\}$. \\
          } } }

    } \lIf{$\{v,w\} \in E''$} { \lIf{$\triangle(uvw)$ is empty} {
        Let $E' = E' \cup \{\{u, w\}\}$. \\
      } \Else{ Let $p$ and $q$ be the points in $\triangle(uvw)$ such
        that $\angle(vwp)$ and
        $\angle(quv)$ are minimum. \\
        Let $E' = E' \cup \{ \{w, p\}, \{q, u\} \}$.  } } }

  \caption{Geometric planar 2-Edge connected subgraph with longest
    edge length bounded by $2$}
  \label{alg:alg3}
\end{algorithm}

\end{proof}

\begin{theorem}\label{thm:main4}
  Let $P$ be a set of $n$ points in the plane in general position such that $U(P, 1)$ is 2-edge connected.  Then $U(P, 2)$ has a spanning geometric
planar 2-edge connected subgraph.
\end{theorem}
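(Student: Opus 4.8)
The plan is to derive Theorem~\ref{thm:main4} from Theorem~\ref{thm:main3} by reworking the one place in the latter's proof that uses $2$-vertex connectivity. There one takes an MST $T$ of $U(P,1)$, properly $2$-colours its internal vertices black and red and its leaves green, and for each black $Arduous$ vertex $v$ --- writing $\mathcal{G}_1,\mathcal{G}_2$ for the two components of $T\setminus v$ --- one needs an edge of $U(P,1)$ joining $\mathcal{G}_1$ and $\mathcal{G}_2$ without passing through $v$; such an edge exists by $2$-vertex connectivity, and adding a shortest one and removing the crossing it creates (via Lemmas~\ref{lem:crossmst}--\ref{lem:length}) places $v$ on a cycle with every new edge of length at most $2$. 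Once no black $Arduous$ vertex remains, the rest of the construction of Theorem~\ref{thm:main3} --- the minimum-degree-$2$ planar subgraph $G\subseteq U(P,2)$ of Theorem~\ref{thm:main}, the SNN edges, and the absorption of bridges into cycles --- together with the bound of Corollary~\ref{cor:1}, uses only connectivity and minimum degree $2$. So it suffices to re-establish, from $2$-edge connectivity alone, that every black $Arduous$ vertex can be put on a cycle by adding a short edge.

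Fix a black $Arduous$ vertex $v$; then $v$ has degree $2$ in $T$, so $T\setminus v$ has exactly the two components $\mathcal{G}_1,\mathcal{G}_2$. If $v$ is not a cut vertex of $U(P,1)$, then $U(P,1)\setminus v$ is connected and contains an edge between $\mathcal{G}_1$ and $\mathcal{G}_2$, and the argument of Theorem~\ref{thm:main3} applies verbatim. Otherwise $v$ is a cut vertex and $\mathcal{G}_1,\mathcal{G}_2$ are precisely the components of $U(P,1)\setminus v$; since $U(P,1)$ is bridgeless, $v$ has at least two neighbours inside $\mathcal{G}_1$, so there is a neighbour $a\in\mathcal{G}_1$ of $v$ different from the $T$-neighbour of $v$ in $\mathcal{G}_1$, with $d(v,a)\le 1$. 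Adding $\{v,a\}$ closes a cycle through $v$ inside $\mathcal{G}_1\cup\{v\}$ (along the $T$-path from $a$ to $v$, all of whose edges are present), so $v$ is no longer $Arduous$; any edge crossed by $\{v,a\}$ is identified as a $Tie$ or a $Bow$ by Lemmas~\ref{lem:crossmst}--\ref{lem:crossvirtual2} and removed by the same rerouting, each replacement edge having length at most $\sqrt{2}$ (Lemma~\ref{lem:pico3}) or at most $2$ (Lemma~\ref{lem:length}). Doing this for all black $Arduous$ vertices and then running the unchanged remainder of Theorem~\ref{thm:main3} yields a $2$-edge connected geometric planar spanning subgraph of $U(P,2)$.

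I expect the book-keeping in the cut-vertex case to be the main obstacle. One must check that adding $\{v,a\}$ and the edges that replace its crossings never turns another black vertex into an $Arduous$ one --- as in Theorem~\ref{thm:main3}, this should follow from the angular bounds ($2\pi/3$ or $5\pi/6$) attached to each rerouting, which leave every touched vertex on a cycle or of degree at least three --- and that every crossing produced by $\{v,a\}$ really is of $Tie$ or $Bow$ type, so that the crossing-removal lemmas of Section~\ref{sec:mindegree} apply and no new edge exceeds length $2$. The delicate subcase is a crossing between $\{v,a\}$, which lies in $\mathcal{G}_1\cup\{v\}$, and an edge drawn on the $\mathcal{G}_2$-side of $v$; here one exploits that $a$ may be chosen among the several neighbours of $v$ in $\mathcal{G}_1$, and that any point of $\mathcal{G}_1$ and any point of $\mathcal{G}_2$ are at distance more than $1$ while $d(v,a)\le 1$. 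Everything else is a direct reuse of Theorems~\ref{thm:main}, \ref{thm:main2}, \ref{thm:main3} and Corollary~\ref{cor:1}.
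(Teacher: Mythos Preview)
Your approach is genuinely different from the paper's, and considerably more laborious. The paper's proof is three lines: it takes the biconnected components (blocks) $P_i$ of $U(P,1)$; since $U(P,1)$ is bridgeless, every block is $2$-vertex connected on at least three vertices, so Theorem~\ref{thm:main3} applies to each $U(P_i,1)$ and produces a $2$-edge connected planar spanning subgraph $G_i\subseteq U(P_i,2)$; the union $\bigcup_i G_i$ is then declared $2$-edge connected and planar. You instead try to re-run the internals of Theorem~\ref{thm:main3} under the weaker hypothesis, patching the one step (existence of a $\mathcal{G}_1$--$\mathcal{G}_2$ edge avoiding $v$) that genuinely needs $2$-vertex connectivity.

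Your patch has real gaps. In the cut-vertex case you add an edge $\{v,a\}$ incident to $v$ and then invoke Lemmas~\ref{lem:crossmst}--\ref{lem:crossvirtual2} to classify and remove its crossings. But those lemmas are stated and proved only for SNN edges $(x,y)$ with $x$ a \emph{leaf} of $T$ and $y$ its second nearest neighbour; the edge $\{v,a\}$ is neither, since $v$ is internal of degree two and $a$ need not be the second nearest neighbour of anything. The conclusions (that a crossing with a tree edge forces a $Tie$ with one side in $T$, that the quadrangle is empty, etc.) do not transfer automatically. Moreover, the later Claim inside Theorem~\ref{thm:main3} --- governing how SNN edges interact with the augmenting set $E'$ --- relies on each edge of $E'$ being a \emph{shortest} edge between the two sides of some $T\setminus v'$; your $\{v,a\}$ does not have this minimality property, so the case analysis there need not apply either. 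Finally, the ``delicate subcase'' you flag (a crossing between $\{v,a\}$ and an edge on the $\mathcal{G}_2$ side) is left entirely open; saying that $a$ may be chosen freely among several neighbours is not yet an argument.

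The block-decomposition route sidesteps all of this: it never modifies the machinery of Theorem~\ref{thm:main3}, only its input. If you want to pursue your direct approach, you would need either to prove analogues of Lemmas~\ref{lem:crossmst}--\ref{lem:crossvirtual2} for edges of the form $\{v,a\}$ with $v$ internal, or to choose $a$ so carefully (e.g.\ the nearest neighbour of $v$ in $\mathcal{G}_1$ other than its $T$-neighbour there) that the geometry forces the same $Tie$/$Bow$ structure --- and then redo the interaction analysis with $E'$ and $E''$. That is substantially more work than the paper's argument.
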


\begin{proof}
Consider the subsets $P_i$ of $P$ such that $U(P_i, 1)$ is 2-vertex connected.  Using Theorem~\ref{thm:main3}, we can construct a spanning 2-edge 
connected geometric planar subgraph $G_i$ of $U(P_i, 2)$ since each $U(P_i, 2)$ has at least three vertices. 
 It is not difficult to see that $\bigcup G_i$ is 2-edge connected and planar. \qed
\end{proof}

\section{UDG of High Connectivity without 2-Edge Connected Geometric Planar Subgraphs}
\label{sec:highconnected}

One may ask: for which $k>1$, a $k$-edge (or $k$-vertex) connected $U(P,1)$ with $n$ points has a 
spanning 2-edge connected geometric planar subgraph? We will show that even for $k \in O(\sqrt{n})$ this is not always true.

\begin{theorem}\label{thm:highcon}
 There exist a set $P$ of $n$ points in the plane so that $U(P,1)$ is $k$-vertex connected,  $k \in O(\sqrt{n})$,  
but $U(P,17/16)$ does not contain any  2-edge connected geometric planar spanning subgraph.
\end{theorem}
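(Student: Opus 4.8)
The plan is to build an explicit configuration that combines a highly connected ``skeleton'' with a large number of local ``obstruction gadgets'', each a refinement of the entry-point components used in the proofs of Theorems~\ref{thm:thm2} and~\ref{thm:thmlower3}. Concretely, I would take $\Theta(k)$ such gadgets, each on $\Theta(k)$ points, so that $n=\Theta(k^2)$ (which is exactly what makes $k\in O(\sqrt n)$), and interleave their ``attachment'' points into a skeleton that is $k$-vertex connected --- for instance $k$ nested cycles (or $k$ parallel ``columns'') of clusters, each cluster carrying at least $k$ points, consecutive clusters pairwise within distance $1$ --- arranged so that every gadget is glued to the skeleton along $\ge k$ edges of $U(P,1)$, but so that, crucially, the skeleton supplies no edge of length $\le 17/16$ that ``goes around'' the distinguished vertex of any gadget. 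The distinguished vertex $v$ of a gadget will have $\deg_{U(P,1)}(v)\ge k$ (so no degree obstruction exists and $U(P,1)$ stays $k$-connected), yet every cycle through $v$ that uses only edges of length $\le 17/16$ is forced to cross some other short edge that any $2$-edge connected spanning subgraph must already contain.

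First I would verify the connectivity claim. Since the skeleton is clustered, each cluster has $\ge k$ vertices mutually within distance $1$, and consecutive clusters/levels are pairwise within distance $1$, a standard Menger-type argument applies: any vertex cut of size $<k$ fails to contain a whole cluster on every ``level'', so one can reroute through an uncut cluster; equivalently, between any two vertices of $U(P,1)$ I would exhibit $k$ internally vertex-disjoint paths running through the $k$ clusters of some intermediate level. This step is routine once the metric layout of the skeleton is fixed, and the gadget attachments (being along $\ge k$ edges) do not decrease connectivity.

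Next, the heart of the argument: suppose $H\subseteq U(P,17/16)$ is a spanning, planar, $2$-edge connected geometric graph. Because $H$ is $2$-edge connected and plane, every face of $H$ is bounded by a cycle; in particular the distinguished vertex $v$ of each gadget lies on a cycle $\mathcal{C}_v$ of $H$ all of whose edges have length $\le 17/16$. I would then run a local analysis inside a gadget, in the spirit of Lemmas~\ref{lem:pico}--\ref{lem:length}: enumerate the possible ``short'' cycles through $v$, and show that each of them must either use an edge longer than $17/16$ or cross an edge of $H$ that is itself \emph{forced} --- its endpoints being vertices whose only way to sit on a short cycle, planarly, is via that edge. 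The quantitative threshold $17/16$ would drop out of the extremal triangle/angle configuration in this case analysis, exactly as the bounds $\sqrt2$, $\sqrt5$ and $2$ arose in Sections~\ref{sec:mindegree}--\ref{sec:2edge2}. Since $H$ is crossing-free, this is a contradiction, so no such $H$ exists.

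The step I expect to be the main obstacle is precisely this last case analysis together with the design constraint on the skeleton: one must make the forced crossing unavoidable for \emph{every} $2$-edge connected planar completion, not just the naive one, while simultaneously keeping the skeleton dense enough to be $k$-vertex connected yet sparse enough --- and geometrically far enough from each $v$ --- that it never offers a short alternative cycle bypassing the gadget. It is this tension, enough points for connectivity $k$ but a tightly controlled neighborhood of each obstruction vertex, that both dictates the $n=\Theta(k^2)$ point budget and pins down the bound $17/16$ appearing in the statement; making the gadget geometry and the skeleton layout cooperate is where the real work lies.
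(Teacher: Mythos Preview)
Your high-level architecture --- $\Theta(k)$ obstruction gadgets on $\Theta(k)$ points each, glued together by a $k$-connected skeleton, giving $n=\Theta(k^2)$ --- is exactly the scheme the paper uses. The paper's skeleton consists of ``wire'' components (little $k$-connected blocks) strung between the gadgets, and its gadget is an explicit configuration $C^k$ with two distinguished ``isolated'' vertices $v,v'$; the $C^k$'s are placed at mutual distance greater than $17/16$ so the skeleton cannot interfere locally.

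Where your proposal diverges, and where the real gap lies, is in your treatment of the constant $17/16$ and of the local obstruction. You expect $17/16$ to ``drop out'' of an extremal angle/length analysis in the spirit of Lemmas~\ref{lem:pico}--\ref{lem:length}; it does not. In the paper $17/16$ is \emph{not} a tight threshold at all: it is simply the number produced by one particular explicit layout of $C^k$ (vertices along two chains spaced at $\tfrac14-\epsilon$, which forces the diagonal needed to rescue the isolated vertex to have length $17/16-\delta$). No case analysis of ``all short cycles through $v$'' is performed or needed. The obstruction argument is a two-line combinatorial observation about $C^k$: any planar spanning subgraph that keeps $v$ and $v'$ at degree $\ge 2$ must use the long crossing diagonals $\{u_1,u'_k\}$ and $\{u'_1,u_k\}$, and including both leaves $v,v'$ at degree one --- so no $2$-edge connected planar spanning subgraph of $C^k$ (hence of the whole configuration) exists even in $U(P,17/16)$. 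In short, the work is entirely in writing down the right explicit gadget; once you have it, there is no enumerative case analysis, and the constant is read off the picture rather than derived. Your plan would eventually succeed, but only after you abandon the extremal-analysis route and actually build the gadget.
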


\begin{proof}
Assume $k=2m$.  Consider the $C^k$ and the wire components depicted in Figure~\ref{fig:pair} and Figure~\ref{fig:2edgwire} with $2k+2$ vertices and $2k$ vertices respectively.  It is easy to see that $C^k$ is a valid two-vertex connected UDGs and the wire is a valid $k$-vertex connected UDGs.  Observe that $C^k$ does not have a 2-edge connected planar subgraph since the inclusion of $\{u_1, u'_k\}$ and $\{u'_1, u_k\}$ leaves $v'$ and $v$ with degree one respectively. Hence, we call $v$ and $v'$ the isolated vertices of $C^k$.  Observe that we can embed $C^k$ in such a way that the distances $d(v, u_k),d(u_k, u_{k-1}),d(u_2, u_1)$ and $d(v', u'_k),d(u'_k, u'_{k-1}),d(u'_2, u'_1)$ are $\frac{1}{4}-\epsilon$. Hence, $d(u'_1, v)=d(u'_1, u_2) = d(u_1, v')=d(u_1, u'_2) = 17/16 - \delta$.  
Let $C^k_i$ be $m$ consecutive $C^k$ components in such a way that they are at distance greater than $17/16$ from each other.  We can connect the upper and lower part of $C^k_i$ with $C^k_{i+1}$ with a constant number of wires, i.e. creating $k$ independent paths that connect the upper and the lower part of $C^k_i$ and $C^k_{i+1}$ in such a way that the isolated vertices of each $C^k_i$ are far from the wires as depicted in Figure~\ref{fig:pair}.  It is easy to see that the resulting graph is $k$-vertex connected and has $O(k^2)$ vertices. 

  \begin{figure}[!ht]
    \begin{center}
      \subfloat[$C^k$
      component.]{\label{fig:pair}\includegraphics[scale=.6]{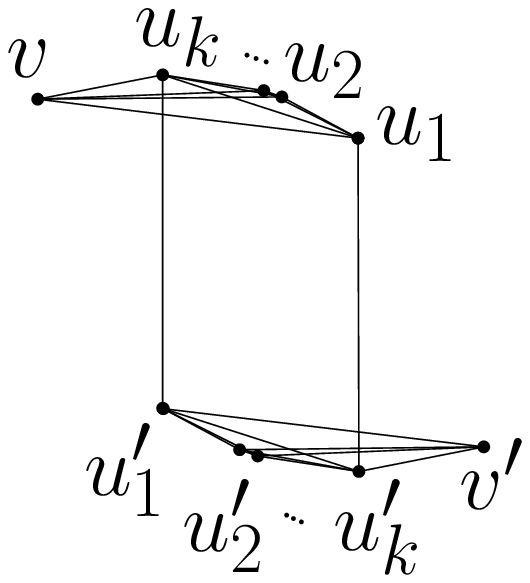}}
      \hspace{1cm}%
      \subfloat[Wire.]{\label{fig:2edgwire}\includegraphics[scale=.6]{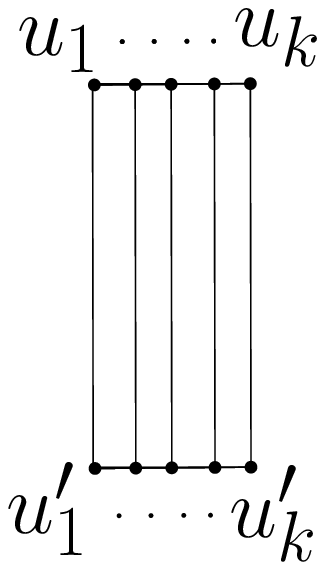}}
      \hspace{1cm}%
      \subfloat[Upper connection between $C^k_i$ and
      $C^k_{i+1}$.]{\label{fig:pair2}\includegraphics[scale=.7]{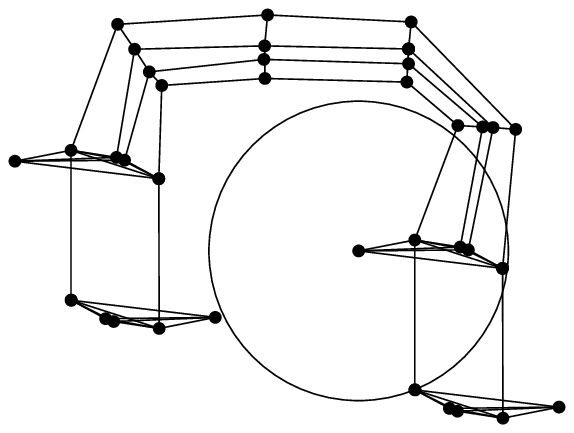}}
    \end{center}
    \caption{$k$-vertex connected UDG that does not have 2-edge
      connected planar subgraph.}
  \end{figure}
\qed
\end{proof}

\section{Conclusion}

In this paper, we have shown that for any given point set 
$P$ in the plane forming a 2-edge connected unit disk graph, 
the geometric graph $U(P,2)$ contains a 2-edge connected 
geometric planar graph that spans $P$. It is an open problem 
to determine necessary and sufficient conditions for
constructing $k$-vertex (or $k$-edge) connected planar
straight line edge 
graphs with bounded edge length on a set of points 
for $3 \leq k \leq 4$.

\bibliographystyle{plain}
\bibliography{biblio}

\end{document}